\newcommand{\diag}{\mathop{\rm diag}}
\newcommand{\Tr}{\mathop{\rm tr}}
\newcommand{\Var}{\mathop{\rm Var}}
\newcommand{\vast}{\bBigg@{4}}
\newcommand{\vastt}{\bBigg@{6}}
\theoremstyle{remark}
\newtheorem{rem}{Remark}
\newtheorem{lem}{Lemma}
\newtheorem*{ddef}{Definition}
\newtheorem*{cor}{Corollary}
\newtheorem{prop}{Proposition}
\g@addto@macro\th@remark{\thm@headpunct{\normalfont:}}
\newcommand{\distas}[1]{\mathbin{\overset{#1}{\kern\z@\sim}}}%
\newsavebox{\mybox}\newsavebox{\mysim}
\newcommand{\distras}[1]{%
  \savebox{\mybox}{\hbox{\kern3pt$\scriptstyle#1$\kern3pt}}%
  \savebox{\mysim}{\hbox{$\sim$}}%
  \mathbin{\overset{#1}{\kern\z@\resizebox{\wd\mybox}{\ht\mysim}{$\sim$}}}%
	}
\begin{document}

\title{Distributed Spatial Multiplexing Systems with Hardware Impairments and Imperfect Channel Estimation under Rank-$1$ Rician Fading Channels}

\author{Nikolaos~I.~Miridakis, Theodoros~A.~Tsiftsis,~\IEEEmembership{Senior Member,~IEEE}\\and~Corbett~Rowell,~\IEEEmembership{Senior Member,~IEEE}
\thanks{N. I. Miridakis is with the Department of Computer Engineering, Piraeus University of Applied Sciences, 12244 Aegaleo, Greece (e-mail: nikozm@unipi.gr).}
\thanks{T. A. Tsiftsis is with the School of Engineering, Nazarbayev University, 010000 Astana, Kazakhstan and with the Department of Electrical Engineering, Technological Educational Institute of Central Greece, 35100 Lamia, Greece (e-mails: theodoros.tsiftsis@nu.edu.kz, tsiftsis@teiste.gr).}
\thanks{C. Rowell is with the Dept. of Electronic \& Computer Engineering, Hong Kong University of Science \& Technology, Hong Kong and with Rohde \& Schwarz, Munich, Germany (email: corbett.rowell@gmail.com).}}

\markboth{}%
{}

\maketitle

\begin{abstract}
The performance of a multiuser communication system with single-antenna transmitting terminals and a multi-antenna base-station receiver is analytically investigated. The system operates under independent and non-identically distributed rank-$1$ Rician fading channels with imperfect channel estimation and residual hardware impairments (compensation algorithms are assumed, which mitigate the main impairments) at the transceiver. The spatial multiplexing mode of operation is considered where all the users are simultaneously transmitting their streams to the receiver. Zero-forcing is applied along with successive interference cancellation (SIC) as a means for efficient detection of the received streams. New analytical closed-form expressions are derived for some important performance metrics, namely, the outage probability and ergodic capacity of the entire system. Both the analytical expressions and simulation results show the impact of imperfect channel estimation and hardware impairments to the overall system performance in the usage scenarios of massive MIMO and mmWave communication systems.
\end{abstract}

\begin{IEEEkeywords}
Hardware impairments, imperfect channel estimation, massive MIMO, Rician fading, spatial multiplexing, successive interference cancellation (SIC), zero-forcing (ZF), mmWave communications.
\end{IEEEkeywords}

\IEEEpeerreviewmaketitle

\section{Introduction}
\IEEEPARstart{S}{patial} multiplexing represents one of the most prominent techniques used for multiple input-multiple output (MIMO) transmission systems \cite{j:gesbert}. In order to reduce the computational complexity at the receiver, linear detectors are usually employed such as  zero-forcing (ZF), or a simplified nonlinear yet capacity-efficient method of successive interference cancellation (SIC). These two techniques can be combined (ZF-SIC) in order to provide an appropriate tradeoff between performance and computational complexity (e.g., see \cite{j:MiridakisSurvey} and references therein). The popularity of ZF-SIC is mainly due to the fact that it achieves a high spectral efficiency and a substantial capacity gain, e.g., the V-BLAST approach \cite{golden1999detection}. This is achieved by the sequential detection/decoding of each stream, while it cancels the intra-stream interference at the same time; an efficient technique for MIMO transmission systems. 

Performance assessment of ZF and ZF-SIC has been extensively studied in the technical literature to date \cite{j:Toboso2014,j:MiridakisKaragiannidis2014,j:JiangVaranasi2008,j:JiangVaranasi2011,j:Loyka2006,j:LoykaGagnon2004,j:WCLMiridakisVergados2013,j:OzyurtTorlak2013}. All of these studies assumed perfect channel estimation at the receiver during communication and non-impaired hardware at the transceiver; this is an ideal and a rather overoptimistic scenario for practical communication systems.  Wireless transceiver hardware is usually subject to many impairments: I/Q imbalance, phase noise, and high-power amplifier nonlinearities \cite{schenk2008rf}. These impairments are typically mitigated with the aid of certain compensation algorithms. Nevertheless, inadequate compensation mainly due to the imperfect parameter estimation and/or time variation of the hardware characteristics may result to residual impairments, which are added to the transmitted/received signal \cite{Bjornsonhi2014}. It has been verified from both analytical and experimental results, e.g., \cite{schenk2008rf,zetterberg2011experimental}, that residual impairments can be modeled as additive noise-like signals with certain properties. In addition, an erroneous channel estimation may occur due to imperfect feedback/feedforward signaling and/or rapid channel variations. Several recent research works have investigated \emph{non-ideal} system configurations, governed by either impaired hardware at the transceiver \cite{c:StuderWenk2010,c:ZhangMatthaiouBjornson2014} or imperfect estimates of the channel gains \cite{j:WangMurch2007,j:Narasimhan2005} focusing on the limited case of Rayleigh channel fading. Nonetheless, the joint impact of hardware impairments and channel estimation errors for ZF(-SIC) systems has not been studied into the open technical literature so far.

Rician fading is more general than Rayleigh and more realistic for practical communication systems \cite{kyosti2007winner}. This occurs because the scenario of line-of-sight (LOS) signal propagation is included in the Rician fading model. Two cornerstone paradigms of the promising $5$G deployments, massive MIMO \cite{j:RusekPersson2013} and millimeter-wave (mmWave) communications \cite{j:RappaportShu2013}, rely mostly on LOS (or near-LOS) signal propagation (e.g., see \cite{j:AndrewsBuzzi2014,j:JinMatthaiou2015,j:PengWang2015} and references therein). Recent relevant works studying the performance of spatial multiplexing systems in Rician fading channels, e.g., \cite{j:Siriteanurician2014,j:siriteanuexact2014,j:SiriteanuMiyanaga2012,j:Zhang2014,j:XueSellathurai2015}, assumed either perfect channel estimation or non-impaired hardware at the transceiver. Pioneer works in \cite{j:Siriteanurician2014,j:siriteanuexact2014,j:SiriteanuMiyanaga2012} analyzed the performance of ZF for Rician fading channels including spatial correlation at the transmitter side for the general case of arbitrary (finite) ranges of the antenna array at the transceiver. Studies in \cite{j:Zhang2014,j:XueSellathurai2015} focused on massive MIMO (i.e., deriving asymptotic performance limits) and virtual MIMO systems via multiway relaying transmission, correspondingly.

{\color{black}Current work presents a unified analytical performance study of ZF and ZF-SIC receivers for non-ideal transmission systems, operating under rank-$1$ Rician fading channels with independent and non-identically distributed (i.n.i.d.) statistics for each transmitter.\footnote{Minimum mean-squared error (MMSE) represents another linear detection scheme where MMSE outperforms ZF at the cost of a higher computational burden since the noise variance is required in this case. Their performance gap becomes marginal in a moderately medium-to-high signal-to-noise ratio (SNR) regime \cite{j:JiangVaranasi2011}. For mathematical tractability, we focus on ZF/ZF-SIC herein and we leave the investigation of MMSE for a future work.} This is a suitable model for distributed-MIMO systems, where the involved users maintain arbitrary distances with the receiver and themselves (e.g., consider a heterogeneous cellular network). The rank-$1$ channel model limitation implies that at most one of the received streams experiences Rician fading, whereas all the remaining ones undergo Rayleigh channel fading conditions. Higher rank channel conditions correspond to more LOS-propagated signals; nonetheless, small antenna apertures and large transmitter-receiver distances are likely to yield rank-$1$ channels \cite{j:Siriteanurank0ne}. Further, rank-$1$ channel conditions can be justified as relevant in heterogeneous networking infrastructures (e.g., opportunistic scheduling of femto-cells within macro-cell deployments) \cite{j:SaquibHossain,j:siriteanuexact2014}, and/or in mmWave communications using beamspace MIMO transmission design \cite{j:BradyBehdad}. The ideal communication scenario with perfect hardware and channel estimates is only considered as a special case. A direct applicability of the presented framework can be found in any wireless communication system with LOS or near-LOS transmission using non-ideal equipment. Particular emphasis is given in MIMO systems with large number of antennas, where  the vast yet finite antenna array consists of low-cost (non-ideal) hardware.}

The contributions of this work are summarized as follows:
\begin{itemize}
	\item Novel closed-form expressions are derived with respect to the outage probability of each transmitted stream.
	\item A new analytical expression for the ergodic capacity of the Rician-faded stream in terms of a single infinite series representation and a corresponding new closed-form expression for the remaining Rayleigh-faded streams (and the entire system sum-capacity) are presented. The aforementioned analytical expression of the Rician-faded signal, although in a non-closed form, is straightforward, not computationally complex and is formed by rapidly converging series.
	\item These formulae tightly approximate the performance of the general case, while they become exact under perfect channel estimation.
	\item The derived results are more computationally efficient than existing methods, such as numerical manifold integrations and exhaustive simulations.
\end{itemize}

The rest of this paper is organized as follows: Section \ref{System Model} describes the considered system model. In Section \ref{Statistics of SNR} some key statistical properties of the received signal-to-noise ratio (SNR) for each stream are analyzed. The derivation of the considered system metrics of outage probability and ergodic capacity are provided in Section \ref{Performance Metrics}, while some relevant numerical results are illustrated in Section \ref{Numerical Results}. Finally, Section \ref{Conclusions} concludes the paper.

\emph{Notation}: Vectors and matrices are represented by lowercase bold typeface and uppercase bold typeface letters, respectively. $\mathbf{X}^{-1}$ is the inverse of $\mathbf{X}$ and $\mathbf{x}_{i}$ denotes the $i$th coefficient of $\mathbf{x}$. The coefficient in the $i$th row and $j$th column of $\mathbf{X}$ is presented as $[\mathbf{X}]_{ij}$. A diagonal matrix with entries $x_{1},\cdots,x_{n}$ is defined as $\diag\{x_{i}\}^{n}_{i=1}$. The superscripts $(\cdot)^{T}$ and $(\cdot)^{\mathcal{H}}$ denote transposition and Hermitian transposition, respectively, $\otimes$ is the Kronecker product between matrices, $\|\cdot\|$ corresponds to the vector Euclidean norm, while $|\cdot|$ represents the absolute (scalar) value. Trace and determinant of $\mathbf{X}$ are, respectively, given as $\Tr[\mathbf{X}]$ and $\det[\mathbf{X}]$. In addition, $\mathbf{I}_{v}$ stands for the $v\times v$ identity matrix, $\mathbb{E}[\cdot]$ is the expectation operator, $\Var[\cdot]$ represents statistical variance, $\overset{\text{d}}=$ represents equality in probability distributions, $\overset{\text{d}}\approx$ denotes almost equality in probability distributions and $\text{Pr}[\cdot]$ returns probability. Also, $f_{X}(\cdot)$ and $F_{X}(\cdot)$ represent probability density function (PDF) and cumulative distribution function (CDF) of the random variable (RV) $X$, respectively. Complex-valued Gaussian RVs with mean $\mu$ and variance $\sigma^{2}$, non-central chi-squared and (central) chi-squared RVs are denoted, respectively, as $\mathcal{CN}(\mu,\sigma^{2})$, $\mathcal{X}^{2}_{v}(u)$ and $\mathcal{X}^{2}_{v}$ with $v$ degrees-of-freedom (DoF) and $u$ is the non-centrality parameter. Also, $\text{Beta}(a,b)$ stands for the (central) Beta distribution with $a$ and $b$ as scale parameters. $\mathcal{CW}_{M}(N,\mathbf{R})$ is a complex-valued central Wishart distribution with dimension $M$, DoF $N$ and scale matrix $\mathbf{R}$. Moreover, $\Gamma(\cdot)$ denotes the Gamma function \cite[Eq. (8.310.1)]{tables}, $B(\cdot,\cdot)$ is the Beta function \cite[Eq. (8.384.1)]{tables}, $\Gamma(\cdot,\cdot)$ is the upper incomplete Gamma function \cite[Eq. (8.350.2)]{tables}, $\gamma(\cdot,\cdot)$ is the lower incomplete Gamma function \cite[Eq. (8.350.1)]{tables}, while $(\cdot)_{p}$ is the Pochhammer symbol with $p \in \mathbb{N}$ \cite[p. xliii]{tables}. $I_{n}(\cdot)$ represents the $n$th order modified Bessel function of the first kind \cite[Eq. (8.445)]{tables}, ${}_1F_{1}(\cdot,\cdot;\cdot)$ is the Kummer's confluent hypergeometric function \cite[Eq. (9.210.1)]{tables}, ${}_2F_{1}(\cdot,\cdot,\cdot;\cdot)$ is the Gauss hypergeometric function \cite[Eq. (9.100)]{tables}, $Q_{\nu}(\cdot,\cdot)$ is the generalized $\nu$th order Marcum-$Q$ function \cite{b:marcum}, and $Q_{\mu,\nu}(\cdot,\cdot)$ stands for the standard Nuttall-$Q$ function \cite[Eq. (86)]{nuttall1972some}. Finally, $\mathcal{O}(\cdot)$ is the Landau symbol, i.e., $f(x)=\mathcal{O}(g(x))$, when $|f(x)|\leq v |g(x)| \ \forall x\geq x_{0}$, $\left\{v,x_{0}\right\} \in \mathbb{R}$.

\section{System Model}
\label{System Model}
Consider a wireless communication system with $M$ single-antenna transmitters and a receiver equipped with $N\geq M$ antennas.\footnote{From the analysis presented hereafter, the classical single-user communication scenario with $M$ co-located transmit antennas is included as a special case.} The receiver performs channel estimation, while full-blind transmitters are assumed. The spatial multiplexing mode of operation is implemented, where $M$ independent data streams are simultaneously transmitted by the corresponding nodes. A suboptimal yet efficient linear detection scheme of ZF is adopted which is performed at the receiver using successive decoding or SIC. The actual and the detected signal at the receiver are, respectively, defined as
\begin{align}
\mathbf{y}=\mathbf{H} (\mathbf{s}+\mathbf{n}_{T})+\mathbf{n}_{R}+\mathbf{w},
\label{eq11}
\end{align}
and
\begin{align}
\hat{\mathbf{y}}=\hat{\mathbf{H}} (\mathbf{s}+\mathbf{n}_{T})+\mathbf{n}_{R}+\mathbf{w},
\label{eq1}
\end{align}
where $\mathbf{y} \in \mathbb{C}^{N\times 1}$, $\hat{\mathbf{y}} \in \mathbb{C}^{N\times 1}$, $\mathbf{H}\in \mathbb{C}^{N\times M}$, $\hat{\mathbf{H}}\in \mathbb{C}^{N\times M}$, $\mathbf{s} \in \mathbb{C}^{M\times 1}$ and $\mathbf{w} \in \mathbb{C}^{N\times 1}$ are the received signal, the detected signal, the actual flat-fading\footnote{A narrowband communication scenario is assumed via mutually independent subcarrier transmissions, as in current long-term evolution (LTE) systems. This strategy can also be used for wideband communication scenarios (e.g., mmWave), by establishing a vast range of multiple subcarriers, and hence facilitating the wideband communication with the aid of multiple virtual narrowband transmissions \cite{j:Heath2016,j:Barati2015,j:Alkhateeb2014,j:Ayach2014}.} channel matrix, the estimated channel matrix, the transmitted signal, and the additive white Gaussian noise (AWGN), respectively. Let $\mathbf{w}\overset{\text{d}}=\mathcal{CN}(\mathbf{0},N_{0}\mathbf{I}_{N})$ with $N_{0}$ representing the AWGN variance and $\mathbb{E}[\mathbf{s}\mathbf{s}^{\mathcal{H}}]=p \mathbf{I}_{M}$ with $p$ denoting the transmit power per antenna. The hardware impairments occurring at the transmitter and receiver side are $\mathbf{n}_{T}$ and $\mathbf{n}_{R}$. Typically, it holds from \cite[Eqs. (7) and (8)]{c:matthaiouhimimo2014} that $\mathbf{n}_{T}\overset{\text{d}}=\mathcal{CN}(\mathbf{0},p \kappa^{2}_{T}\mathbf{I}_{M})$ and $\mathbf{n}_{R}\overset{\text{d}}=\mathcal{CN}(\mathbf{0},p \kappa^{2}_{R}M\mathbf{I}_{N})$, whereas $\kappa_{T}$ and $\kappa_{R}$ represent certain parameters leveraging the residual hardware impairments at the transmitter and receiver. The \emph{Gaussianity} of these distributions has been verified experimentally \cite[Fig. 4.13]{b:wenk2010mimo} where it relies on the distortion noise describing the aggregate effect of several residual hardware impairments \cite{Bjornsonhi2014,j:zhang2015achievable}\footnote{In this current study, it is assumed that compensation algorithms are applied, which mitigate the main hardware impairments \cite{schenk2008rf}.}. Note that $\kappa_{T}$ and $\kappa_{R}$ are inherently associated with the error-vector magnitude (EVM) metric \cite{b:holma2011lte}, which is widely used to quantify the mismatch between the intended and the actual signal in RF transceivers. In addition, EVM facilitates the identification of specific types of degradations encountered in typical wireless transceivers, such as the I/Q phase imbalance, local oscillator phase noise, carrier leakage, nonlinearity, and local oscillator frequency error \cite{j:liu2006evm,j:Georgiadis2004evm}. EVM is defined as
\begin{align}
\text{EVM}\triangleq \sqrt{\frac{\mathbb{E}[\left\|\mathbf{n}_{l}\right\|^{2}]}{\mathbb{E}[\left\|\mathbf{b}\right\|^{2}]}}=\kappa_{l},
\end{align}
with $l\in\{T,R\}$ and $\mathbf{b}\in\{\mathbf{s},\hat{\mathbf{y}}\}$, correspondingly. The architecture of the considered system is illustrated in Fig. \ref{fig1a}.
\begin{figure}[!t]
\centering
\includegraphics[trim=0.0cm 0.0cm 0.0cm 0.0cm, clip=true,totalheight=0.09\textheight]{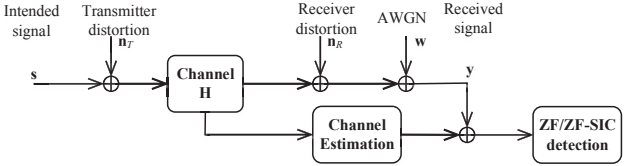}
\caption{Block diagram of the considered system. Note that the transmit signal $\mathbf{s}$ consists of $M$ independent streams with non-identical statistics, while the received signal $\mathbf{y}$ consists of $N$ coefficients (i.e., $N$ receive antennas) where it generally holds that $N\geq M$.}
\label{fig1a}
\end{figure}

In the presence of perfect channel estimation, i.e., when $\hat{\mathbf{H}}=\mathbf{H}$, (\ref{eq11}) coincides with (\ref{eq1}). In the more realistic scenario of channel estimation errors, we have
\begin{align}
\hat{\mathbf{H}}\triangleq \mathbf{H}+\sigma \mathbf{\Omega},
\label{esthdef}
\end{align}
where $\mathbf{\Omega}\overset{\text{d}}= \mathcal{CN}(\mathbf{0},\mathbf{I}_{N})$ whereas $\mathbf{\Omega}\in \mathbb{C}^{N\times M}$ is the channel estimation error matrix, which is uncorrelated with the true channel matrix $\mathbf{H}$.\footnote{This statistical feature can be captured by adopting either maximum-likelihood or MMSE channel estimation \cite{j:WangMurch2007,j:Narasimhan2005}.} Also, $0\leq \sigma \leq 1$ is a measure of the channel estimation accuracy. Finally, the (true) Rician channel fading matrix with non-identical statistics is defined as
\begin{align}
\mathbf{H}\triangleq \mathbf{H}_{d}+\mathbf{H}_{r},
\label{ricianchanndef}
\end{align}
where
\begin{align}
\mathbf{H}_{d}\triangleq [\mathbf{h}_{d}\ \underbrace{\mathbf{0}_{(N\times 1)}\ \cdots \mathbf{0}_{(N\times 1)}}_{(M-1)\text{ column vectors}}],
\label{hddef}
\end{align}
and
\begin{align}
\mathbf{H}_{r}\overset{\text{d}}=\mathcal{CN}\left(\mathbf{0},\mathbf{I}_{N}\otimes \diag\left\{\frac{d^{-a_{i}}_{i}}{(K+1)}\right\}^{M}_{i=1}\right),
\label{hrdef}
\end{align}
denoting the deterministic and random components of $\mathbf{H}$, respectively. Also, $\mathbf{0}_{(N\times 1)}$ stands for an $N$-sized vector with all its coefficients set to zero, $\alpha_{i} \in \{2,6\}$ represents the path-loss factor corresponding to propagation scenarios from free-space path loss to dense urban path loss \cite[Table 2.2]{b:goldsmith}, $K$ denotes the Rician-$K$ factor, and $d_{i}$ is the normalized  (with respect to $1$km) link-distance between the $i$th transmitter and the receiver. For ease of exposition and without loss of generality, the Rician-faded signal is assumed the left-most one with respect to $\mathbf{H}$. Therefore, it holds that
\begin{align}
[\mathbf{H}_{d}]_{q,1}\triangleq \mathbf{h}_{d}=\sqrt{\frac{d^{-a_{1}}_{1}K}{(K+1)}}\exp\left(-\frac{\mathbbmtt{j}(q-1)2\pi D\sin(\varphi)}{\lambda}\right),
\label{hddefff}
\end{align}
where $\mathbbmtt{j}\triangleq \sqrt{-1}$ and $1\leq q\leq N$. Moreover, $\varphi$ is the arrival angle of the LOS signal from the corresponding transmitter measured relative to the array boresight, $D=\lambda/2$ is the antenna spacing at the receiver, while $\lambda$ is the signal transmission wavelength. Thus, it stems that \cite{j:Zhang2014}
\begin{align}
\mathbf{H}\overset{\text{d}}=\mathcal{CN}\left(\mathbf{H}_{d},\mathbf{I}_{N}\otimes \diag\left\{\frac{d^{-a_{i}}_{i}}{(K+1)}\right\}^{M}_{i=1}\right).
\end{align}

\section{Statistics of SNR}
\label{Statistics of SNR}
In the case of the classical ZF detector, upon the signal reception, the ZF filter is applied at the receiver, i.e., the Moore-Penrose pseudoinverse operation at the estimated channel matrix. It is defined as $\hat{\mathbf{H}}^{\dagger}\triangleq (\hat{\mathbf{H}}^{\mathcal{H}}\hat{\mathbf{H}})^{-1}\hat{\mathbf{H}}^{\mathcal{H}}$ yielding that
\begin{align}
\hat{\mathbf{H}}^{\dagger}\mathbf{y}\triangleq \mathbf{r},
\label{rzf}
\end{align}
where $\mathbf{r}$ represents the estimated symbol vector.

\begin{ddef}
In the case of the more sophisticated ZF-SIC detector, (\ref{rzf}) is repeated recursively in $M$ consecutive stages, by replacing $M$ with $M-i+1$ at the $i$th stage. This denotes the corresponding interference nulling and symbol removal from the remaining signal at the corresponding SIC stage. This paper focuses on the ZF-SIC scenario, whereas the conventional ZF scenario arises as a special case by setting $i=1$ (i.e., only one stage with concurrent detection of $M$ symbols).
\end{ddef}

Typically, it holds that $\sigma \ll 1$ \cite{j:WangMurch2007,j:Narasimhan2005}, thereby, $\hat{\mathbf{H}}^{\dagger}$ can be sufficiently approximated by the linear part of its Taylor expansion as
\begin{align}
\hat{\mathbf{H}}^{\dagger}\approx \mathbf{H}^{\dagger}\left(\mathbf{I}_{N}-\sigma \mathbf{\Omega}\mathbf{H}^{\dagger}\right).
\label{taylor}
\end{align}
Note that when $\sigma=0$ (i.e., in the case of perfect channel estimation), (\ref{taylor}) becomes exact since $\hat{\mathbf{H}}^{\dagger}=\mathbf{H}^{\dagger}$.

Using (\ref{taylor}) into the left-hand side of (\ref{rzf}), we have that
\begin{align}
\nonumber
\mathbf{r}&\approx \mathbf{H}^{\dagger}\left(\mathbf{I}_{N}-\sigma \mathbf{\Omega}\mathbf{H}^{\dagger}\right)\mathbf{y}\\
\nonumber
&=\mathbf{H}^{\dagger}\left(\mathbf{I}_{N}-\sigma \mathbf{\Omega}\mathbf{H}^{\dagger}\right)\left(\mathbf{H} (\mathbf{s}+\mathbf{n}_{T})+\mathbf{n}_{R}+\mathbf{w}\right)\\
\nonumber
&=\mathbf{s}+\mathbf{n}_{T}+\mathbf{H}^{\dagger}(\mathbf{n}_{R}+\mathbf{w})-\sigma \mathbf{H}^{\dagger} \mathbf{\Omega} \mathbf{s}\\
\nonumber
&\ \ \ -\sigma \mathbf{H}^{\dagger} \mathbf{\Omega} \mathbf{n}_{T}-\sigma \mathbf{H}^{\dagger} \mathbf{\Omega} \mathbf{H}^{\dagger} (\mathbf{n}_{R}+\mathbf{w})\\
&=\mathbf{s}+\mathbf{w}',
\label{rrzf}
\end{align}
where
\begin{align}
\nonumber
\mathbf{w}'&\triangleq \mathbf{n}_{T}+\mathbf{H}^{\dagger}(\mathbf{n}_{R}+\mathbf{w})-\sigma \mathbf{H}^{\dagger} \mathbf{\Omega} (\mathbf{s}+\mathbf{n}_{T})\\
&-\sigma \mathbf{H}^{\dagger} \mathbf{\Omega} \mathbf{H}^{\dagger} (\mathbf{n}_{R}+\mathbf{w}).
\label{w'}
\end{align}

\begin{cor}
The instantaneous approximate received SNR for the $i$th stream, $1\leq i \leq M$, is expressed as
\begin{align}
\text{SNR}_{i}&=\frac{p}{\mathbb{E}[\mathbf{w}'\mathbf{w}'^{\mathcal{H}}]_{ii}}=\frac{1}{\kappa^{2}_{T}+\psi \left[(\mathbf{H}^{\mathcal{H}}\mathbf{H})^{-1}\right]_{ii}}
\label{snr}
\end{align}
where
\begin{align}
\nonumber
\psi&=\bigg(\kappa^{2}_{R}M+\frac{N_{0}}{p}+\sigma^{2}M(1+\kappa^{2}_{T})+\sigma^{2}\left(\kappa^{2}_{R}M+\frac{N_{0}}{p}\right)\\
&\times \Tr\left[(\mathbf{H}^{\mathcal{H}}\mathbf{H})^{-1}\right]\bigg).
\label{psi}
\end{align}
\end{cor}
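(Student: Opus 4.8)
The plan is to compute the $i$th diagonal entry of the noise covariance $\mathbb{E}[\mathbf{w}'\mathbf{w}'^{\mathcal{H}}]$ directly from the definition \eqref{w'} and then invert the scalar relation $\text{SNR}_i = p/\mathbb{E}[\mathbf{w}'\mathbf{w}'^{\mathcal{H}}]_{ii}$. First I would expand $\mathbf{w}'\mathbf{w}'^{\mathcal{H}}$ into the sum of all pairwise products of its four constituent terms, conditioning on the true channel $\mathbf{H}$ (so $\mathbf{H}^{\dagger}$ is treated as deterministic) and taking expectations over the mutually independent zero-mean Gaussian vectors $\mathbf{n}_T$, $\mathbf{n}_R$, $\mathbf{w}$, $\mathbf{s}$ and the estimation-error matrix $\mathbf{\Omega}$. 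All cross terms between distinct independent sources vanish by zero mean, so only the four ``diagonal'' blocks survive, plus I would keep only terms up to $\mathcal{O}(\sigma^2)$ consistent with the first-order Taylor truncation in \eqref{taylor} (terms of order $\sigma^3$ and higher are dropped, and within the $\sigma^2$ terms the factor $\mathbf{\Omega}\mathbf{\Omega}^{\mathcal{H}}$ is replaced by its expectation $N\mathbf{I}$ where it acts, or more precisely one uses $\mathbb{E}[\mathbf{\Omega}\mathbf{A}\mathbf{\Omega}^{\mathcal{H}}] = \Tr[\mathbf{A}]\,\mathbf{I}_N$ for the $N\times N$ case and the analogous identity on the transmit side).

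Carrying this out term by term: the $\mathbf{n}_T$ term contributes $p\kappa_T^2 \mathbf{I}_M$; the $\mathbf{H}^{\dagger}(\mathbf{n}_R+\mathbf{w})$ term contributes $(p\kappa_R^2 M + N_0)\,\mathbf{H}^{\dagger}(\mathbf{H}^{\dagger})^{\mathcal{H}} = (p\kappa_R^2 M + N_0)(\mathbf{H}^{\mathcal{H}}\mathbf{H})^{-1}$ after using $\mathbf{H}^{\dagger}(\mathbf{H}^{\dagger})^{\mathcal{H}} = (\mathbf{H}^{\mathcal{H}}\mathbf{H})^{-1}$; the $\sigma\mathbf{H}^{\dagger}\mathbf{\Omega}(\mathbf{s}+\mathbf{n}_T)$ term, using independence of $\mathbf{\Omega}$ from $(\mathbf{s},\mathbf{n}_T)$ and $\mathbb{E}[\mathbf{\Omega}\mathbf{B}\mathbf{\Omega}^{\mathcal{H}}]=\Tr[\mathbf{B}]\mathbf{I}_N$, contributes $\sigma^2 p M(1+\kappa_T^2)\,(\mathbf{H}^{\mathcal{H}}\mathbf{H})^{-1}$; and the $\sigma\mathbf{H}^{\dagger}\mathbf{\Omega}\mathbf{H}^{\dagger}(\mathbf{n}_R+\mathbf{w})$ term contributes $\sigma^2(p\kappa_R^2 M + N_0)\,\Tr[(\mathbf{H}^{\mathcal{H}}\mathbf{H})^{-1}]\,(\mathbf{H}^{\mathcal{H}}\mathbf{H})^{-1}$. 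Summing and factoring out $p$, the $ii$ entry becomes $p\big(\kappa_T^2 + \psi\,[(\mathbf{H}^{\mathcal{H}}\mathbf{H})^{-1}]_{ii}\big)$ with $\psi$ exactly as in \eqref{psi}; dividing $p$ by this yields \eqref{snr}. For the ZF-SIC case the same computation applies verbatim at each stage with $M$ replaced by the number of undetected streams, but since \eqref{snr} is written in terms of the full $(\mathbf{H}^{\mathcal{H}}\mathbf{H})^{-1}$ I would note the per-stage reduction is handled downstream in Section~\ref{Performance Metrics}; alternatively one can observe that at stage $i$ the relevant quadratic form is the $ii$ entry of the inverse Gram matrix of the deflated channel, which is the standard V-BLAST structure.

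The main obstacle is bookkeeping rather than any deep difficulty: one must be careful that the scaling $p\kappa_R^2 M$ in the definition of $\mathbf{n}_R$ already carries the factor $M$, so the receiver-impairment and AWGN pieces combine into $p\kappa_R^2 M + N_0$ (equivalently $p(\kappa_R^2 M + N_0/p)$), and one must not double-count the $M$ when it reappears through $\mathbb{E}[\|\mathbf{s}\|^2]=pM$ and through $\Tr[\mathbf{\Omega}\cdots\mathbf{\Omega}^{\mathcal{H}}]$ identities. A second subtlety is justifying that the $\mathcal{O}(\sigma^3)$ and higher cross terms (e.g. the product of $\sigma\mathbf{H}^{\dagger}\mathbf{\Omega}\mathbf{n}_T$ with $\sigma\mathbf{H}^{\dagger}\mathbf{\Omega}\mathbf{H}^{\dagger}(\mathbf{n}_R+\mathbf{w})$, which is nominally $\sigma^2$ but involves a fourth-order moment of $\mathbf{\Omega}$ producing a nominally negligible contribution) are consistently discarded to match the first-order approximation \eqref{taylor}; this is where the ``$\approx$'' in \eqref{snr} is earned. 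Finally, I would remark that when $\sigma=0$ all $\sigma^2$ terms vanish, $\psi$ collapses to $\kappa_R^2 M + N_0/p$, and the expression reduces to the known impaired-hardware-only ZF SNR, which serves as a sanity check; and when additionally $\kappa_T=\kappa_R=0$ one recovers the textbook $\text{SNR}_i = (p/N_0)/[(\mathbf{H}^{\mathcal{H}}\mathbf{H})^{-1}]_{ii}$.
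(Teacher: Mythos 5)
Your proposal is correct and follows essentially the same route as the paper's Appendix A: expand $\mathbb{E}[\mathbf{w}'\mathbf{w}'^{\mathcal{H}}]$ conditioned on $\mathbf{H}$, drop the zero-mean cross terms, and apply $\mathbf{H}^{\dagger}(\mathbf{H}^{\dagger})^{\mathcal{H}}=(\mathbf{H}^{\mathcal{H}}\mathbf{H})^{-1}$, $\mathbb{E}[\mathbf{\Omega}\mathbf{\Omega}^{\mathcal{H}}]=M\mathbf{I}_{N}$ and $\mathbb{E}[\mathbf{\Omega}(\mathbf{H}^{\mathcal{H}}\mathbf{H})^{-1}\mathbf{\Omega}^{\mathcal{H}}]=\Tr[(\mathbf{H}^{\mathcal{H}}\mathbf{H})^{-1}]\mathbf{I}_{N}$ term by term, which reproduces the paper's (\ref{cov}) and hence (\ref{snr})--(\ref{psi}). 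Two cosmetic points only: the expectation of $\mathbf{\Omega}\mathbf{\Omega}^{\mathcal{H}}$ is $M\mathbf{I}_{N}$ rather than $N\mathbf{I}$ (your trace identity immediately fixes this), and the cross terms you describe as higher-order in $\sigma$ in fact vanish exactly under the expectation because the independent noise sources are zero-mean, so the ``$\approx$'' is inherited solely from the Taylor truncation in (\ref{taylor}).
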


\begin{proof}
The proof is presented in Appendix \ref{appsnr}.
\end{proof}

The following lemmas are key results for the subsequent analysis.

{\color{black}\begin{lem}
In the case of i.n.i.d. rank-$1$ Rician fading channels, it holds that
\begin{align}
\mathcal{Y}_{i}\triangleq \frac{1}{[(\mathbf{H}^{\mathcal{H}}\mathbf{H})^{-1}]_{ii}}\overset{\text{d}}=\left(\frac{d^{-\alpha_{i}}_{i}}{(K+1)}\right)\chi^{2}_{N-M+i}(\theta_{i}),\ 1\leq i\leq M,
\label{pdfyy}
\end{align}
where $\theta_{1}\overset{\text{d}}=\text{Beta}(N-M+i,M-1)$, whereas $\theta_{1}$ and $\mathcal{Y}_{1}$ are mutually independent RVs, and $\{\theta_{i}\}^{M}_{i=2}\triangleq 0$. The corresponding PDF is given by \cite[Eq. (29.4)]{b:johnson1994continuous}
\begin{align}
\nonumber
&f_{\mathcal{Y}_{i}|\theta_{i}}(x)=\frac{(K+1)}{2 d^{-\alpha_{i}}_{i}}\exp\left(-\frac{(K+1)x}{d^{-\alpha_{i}}_{i}}-\frac{\theta_{i}}{2}\right)\\
&\times \left(\frac{(K+1)x}{d^{-\alpha_{i}}_{i}\theta_{i}}\right)^{\frac{N-M+i-1}{2}} I_{N-M+i-1}\left(\sqrt{\frac{\theta_{i}(K+1)x}{d^{-\alpha_{i}}_{i}}}\right).
\label{pdfy}
\end{align}
\end{lem}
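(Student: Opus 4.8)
The plan is to track the effect of the successive-nulling operation on the Rician channel matrix and then invoke known Wishart/non-central chi-squared distributional results. First I would recall that at the $i$th SIC stage the effective channel is obtained from $\mathbf{H}$ by deleting the first $i-1$ already-detected columns and projecting the remaining columns onto the orthogonal complement of the subspace they span; the standard algebraic identity for ZF-SIC (see e.g.\ the V-BLAST analysis) gives $1/[(\mathbf{H}^{\mathcal{H}}\mathbf{H})^{-1}]_{ii}=\|\mathbf{P}_{i}^{\perp}\mathbf{h}_{i}\|^{2}$, where $\mathbf{h}_{i}$ is the $i$th column of $\mathbf{H}$ and $\mathbf{P}_{i}^{\perp}$ projects onto the orthogonal complement of $\mathrm{span}\{\mathbf{h}_{i+1},\dots,\mathbf{h}_{M}\}$ (equivalently, for plain ZF with $i=1$, onto the complement of the span of all other columns). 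The rank of that complement is $N-(M-i)=N-M+i$.

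Next I would split according to whether the column carries the LOS component. For $i\ge 2$ the column $\mathbf{h}_{i}$ is purely Rayleigh, $\mathbf{h}_{i}\sim\mathcal{CN}(\mathbf{0},(d_i^{-\alpha_i}/(K+1))\mathbf{I}_N)$, and it is independent of the other columns that determine $\mathbf{P}_i^{\perp}$; conditioning on those columns, $\mathbf{P}_i^{\perp}\mathbf{h}_i$ is a complex Gaussian vector supported on a fixed $(N-M+i)$-dimensional subspace with isotropic covariance, so $\|\mathbf{P}_i^{\perp}\mathbf{h}_i\|^2\overset{\text{d}}=(d_i^{-\alpha_i}/(K+1))\,\chi^2_{N-M+i}$, i.e.\ the non-centrality parameter $\theta_i$ is $0$, as claimed. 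This is the easy case and matches the general i.n.i.d.\ Rayleigh ZF-SIC result.

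For $i=1$ the column is $\mathbf{h}_1=\mathbf{h}_d+\tilde{\mathbf{h}}_1$ with deterministic mean $\mathbf{h}_d$ and $\tilde{\mathbf{h}}_1\sim\mathcal{CN}(\mathbf{0},(d_1^{-\alpha_1}/(K+1))\mathbf{I}_N)$, and $\mathbf{P}_1^{\perp}$ is built from the $M-1$ independent Rayleigh columns $\mathbf{h}_2,\dots,\mathbf{h}_M$. Conditioning on $\mathbf{P}_1^{\perp}$, the vector $\mathbf{P}_1^{\perp}\mathbf{h}_1$ is complex Gaussian with mean $\mathbf{P}_1^{\perp}\mathbf{h}_d$ and isotropic covariance on an $(N-M+1)$-dimensional subspace, hence $\|\mathbf{P}_1^{\perp}\mathbf{h}_1\|^2\overset{\text{d}}=(d_1^{-\alpha_1}/(K+1))\,\chi^2_{N-M+1}(\theta_1)$ with conditional non-centrality $\theta_1=\|\mathbf{P}_1^{\perp}\mathbf{h}_d\|^2/(d_1^{-\alpha_1}/(2(K+1)))$ (the factor $1/2$ coming from the real/imaginary split in the standard non-central $\chi^2$ parametrization, consistent with the PDF in \eqref{pdfy}). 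It remains to identify the distribution of $\theta_1$ over the randomness of $\mathbf{P}_1^{\perp}$. Since $\mathbf{h}_d$ is a fixed vector and $\mathbf{P}_1^{\perp}$ is the projector onto the orthogonal complement of a uniformly-random (Haar-distributed, by rotational invariance of the i.i.d.\ Gaussian columns' column space) $(M-1)$-dimensional subspace of $\mathbb{C}^N$, the normalized quantity $\|\mathbf{P}_1^{\perp}\mathbf{h}_d\|^2/\|\mathbf{h}_d\|^2$ is a ratio of the squared norm of a fixed unit vector under projection onto a random subspace; this is exactly a $\text{Beta}(N-M+1,M-1)$ variable, and absorbing $\|\mathbf{h}_d\|^2=d_1^{-\alpha_1}K/(K+1)$ and the scale gives $\theta_1\overset{\text{d}}=\text{Beta}(N-M+1,M-1)$ as stated. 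Independence of $\theta_1$ and $\mathcal{Y}_1$ follows because conditionally on $\mathbf{P}_1^{\perp}$ (equivalently on $\theta_1$) the fluctuation $\|\mathbf{P}_1^{\perp}\tilde{\mathbf{h}}_1\|^2$ contributing the "central" part is independent of the mean term, and its conditional law (scaled $\chi^2_{N-M+1}(\theta_1)$) combines with the Beta law of $\theta_1$ in product form; finally the conditional PDF \eqref{pdfy} is just the standard non-central $\chi^2$ density rescaled by $(K+1)/d_i^{-\alpha_i}$.

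The main obstacle I expect is the Beta identification for $\theta_1$: one must argue carefully that the column space of $[\mathbf{h}_2\ \cdots\ \mathbf{h}_M]$ is uniformly (Haar) distributed on the Grassmannian — which uses only that these columns are i.i.d.\ isotropic Gaussian, so their joint law is unitarily invariant even though the columns have distance-dependent scales (the scales cancel in the span) — and then that the squared cosine between a fixed direction and a Haar-random $(N-M+1)$-dimensional subspace is $\text{Beta}(N-M+1,M-1)$. Everything else is either the classical ZF-SIC projection identity or a direct substitution into the non-central $\chi^2$ density, so I would state those briefly and spend the bulk of the write-up on the Grassmannian/Beta step, with a pointer to the analogous derivation in the Rician ZF literature cited in the paper.
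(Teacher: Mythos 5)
Your skeleton is the same as the paper's Appendix B: the Schur-complement/projection identity $\mathcal{Y}_i=\mathbf{h}_i^{\mathcal{H}}\mathbf{Q}_i\mathbf{h}_i$ with $\mathbf{Q}_i$ an idempotent projector of rank $N-M+i$ independent of $\mathbf{h}_i$, isotropy giving the central chi-squared law for the zero-mean (Rayleigh) columns, and a conditionally non-central chi-squared law for the Rician column. The only real difference is that the paper disposes of the non-centrality step by citing \cite{siriteanu2016chi}, whereas you try to prove it yourself through the Haar-uniformity of the column space of $[\mathbf{h}_2\cdots\mathbf{h}_M]$ and the fact that the squared projection of a fixed unit vector onto a Haar-random $(N-M+1)$-dimensional subspace of $\mathbb{C}^{N}$ is $\text{Beta}(N-M+1,M-1)$. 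That Grassmannian argument is sound and is exactly what underlies the cited result, so making it explicit is a genuine improvement in self-containedness.

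The gap is in the final scale bookkeeping, i.e.\ at precisely the point the lemma needs. First, $\|\mathbf{h}_d\|^{2}=N\,d_1^{-\alpha_1}K/(K+1)$, not $d_1^{-\alpha_1}K/(K+1)$: by (\ref{hddefff}) each of the $N$ steering-vector entries has modulus $\sqrt{d_1^{-\alpha_1}K/(K+1)}$. Second, with your own (correct) parametrization $\theta_1=2(K+1)\|\mathbf{P}_1^{\perp}\mathbf{h}_d\|^{2}/d_1^{-\alpha_1}$ and $\|\mathbf{P}_1^{\perp}\mathbf{h}_d\|^{2}=B\,\|\mathbf{h}_d\|^{2}$ with $B\overset{\text{d}}=\text{Beta}(N-M+1,M-1)$, you obtain $\theta_1=2NK\,B$ (a constant proportional to $NK$ times a Beta variable under any chi-squared convention). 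Nothing "absorbs" this deterministic factor, and it is not $1$ in general, so the sentence claiming that absorbing $\|\mathbf{h}_d\|^{2}$ "gives $\theta_1\overset{\text{d}}=\text{Beta}(N-M+1,M-1)$ as stated" is a hand-wave that your own computation contradicts: what you have actually shown is that $\theta_1$ is a \emph{scaled} Beta variable, and to land on the lemma as written you must either carry that scale explicitly and state the normalization under which it disappears (which is what the parametrization in \cite{siriteanu2016chi} supplies), or flag the discrepancy. A smaller looseness: the "mutual independence of $\theta_1$ and $\mathcal{Y}_1$" you assert is really independence of $\theta_1$ (a function of $\mathbf{P}_1^{\perp}$ and $\mathbf{h}_d$) from the Gaussian fluctuation $\tilde{\mathbf{h}}_1$, which is what validates the mixture representation; as literally stated, a variable and its own conditional non-centrality parameter are not independent, so phrase that step accordingly.
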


\begin{proof}
The proof is relegated in Appendix \ref{appncxpdf}.
\end{proof}

It is noteworthy that (\ref{pdfyy}) represents an extension of ZF and ZF-SIC reception under Rayleigh-only fading channels \cite{j:GOREHeathPaulraj,j:Loyka2006}, where the corresponding SNR of each stream follows a central chi-squared PDF, such as $\mathcal{Y}_{i}\overset{\text{d}}=\chi^{2}_{N-M+i}$. For rank-$1$ Rician fading channels, the Rician-faded stream follows a non-central chi-squared PDF, conditioned on its non-centrality parameter. In turn, SNR of the Rayleigh-faded $j$th stream, with $2\leq j \leq M$, follows an unconditional central chi-squared PDF, by setting $\theta_{i}=0$ in (\ref{pdfy}), which is influenced from the presence of the Rician-faded signal by a ($K+1$)-scaling factor \cite{j:siriteanuexact2014}. 

\begin{lem}
The unconditional CDF $F_{\mathcal{Y}_{i}}(\cdot)$ can be derived as
\begin{align}
\nonumber
&F_{\mathcal{Y}_{1}}(x)=1-\frac{1}{B(N-M+1,M-1)}\sum^{M-2}_{j=0}\frac{\binom{M-2}{j}}{(N-M+j+1)}\\
\nonumber
&\times \vast\{Q_{N-M+1}\left(1,\sqrt{\frac{(K+1)x}{d^{-\alpha_{1}}_{1}}}\right)-\left(\sqrt{\frac{(K+1)x}{d^{-\alpha_{1}}_{1}}}\right)^{N-M+1}\\
\nonumber
&\times \vast[Q_{2(N-M+j+1),N-M+1}\left(\sqrt{\frac{(K+1)x}{d^{-\alpha_{1}}_{1}}},0\right)\\
&-Q_{2(N-M+j+1),N-M+1}\left(\sqrt{\frac{(K+1)x}{d^{-\alpha_{1}}_{1}}},1\right)\vast]\vast\},
\label{cdfyyyyyy}
\end{align}
and
\begin{align}
\nonumber
&F_{\mathcal{Y}_{i}}(x)=\\
\nonumber
&1-Q_{N-M+i}\left(\sqrt{\theta_{i}},\sqrt{\frac{(K+1)x}{d^{-\alpha_{i}}_{i}}}\right)\\
&=1-\frac{\Gamma\left(N-M+i,\frac{(K+1)x}{2d^{-\alpha_{i}}_{i}}\right)}{\Gamma(N-M+i)},\ 2\leq i\leq M,\ \{\theta_{i}=0\}^{M}_{i=2}.
\label{cdfyyyy}
\end{align}
\end{lem}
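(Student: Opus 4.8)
The plan is to start from the conditional PDF/CDF of $\mathcal{Y}_i$ given in Lemma 1 and integrate out (or directly read off) the relevant distribution. For the easy case $2\le i\le M$, where $\theta_i=0$, the random variable $\mathcal{Y}_i$ is simply a scaled central chi-squared with $N-M+i$ degrees of freedom, so its CDF is the regularized lower incomplete Gamma function; equivalently one uses the identity $Q_{\nu}(0,\sqrt{t})=\Gamma(\nu,t/2)/\Gamma(\nu)$ to write it in the Marcum-$Q$ form claimed in \eqref{cdfyyyy}. This part is essentially a change of variables in $\chi^2_{N-M+i}(x)=(d_i^{-\alpha_i}/(K+1))\,\chi^2_{N-M+i}$ and invoking \cite[Eq.~(8.350.2)]{tables}; no obstacle there.

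For the Rician-faded stream ($i=1$) the argument has two layers. First, conditionally on $\theta_1$, the variable $\mathcal{Y}_1$ is a scaled non-central chi-squared with $N-M+1$ DoF and non-centrality $\theta_1$, so its conditional CDF is $1-Q_{N-M+1}(\sqrt{\theta_1},\sqrt{(K+1)x/d_1^{-\alpha_1}})$. Second, I would remove the conditioning by integrating against the $\text{Beta}(N-M+1,M-1)$ density of $\theta_1$: writing $z=(K+1)x/d_1^{-\alpha_1}$, I need
\[
F_{\mathcal{Y}_1}(x)=1-\frac{1}{B(N-M+1,M-1)}\int_0^1 Q_{N-M+1}(\sqrt{u},\sqrt{z})\,u^{N-M}(1-u)^{M-2}\,\mathrm{d}u .
\]
The route I would take is to expand $(1-u)^{M-2}$ by the binomial theorem, turning the integral into a finite sum (indexed by $j=0,\dots,M-2$) of integrals of the form $\int_0^1 Q_{N-M+1}(\sqrt{u},\sqrt{z})\,u^{N-M+j}\,\mathrm{d}u$. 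Each such moment-type integral of a Marcum-$Q$ function against a power can be evaluated in closed form; the cleanest path is either integration by parts using $\partial_a\,[a^{-\nu+1}Q_\nu(a,b)]\propto a^{-\nu}$-type recurrences, or a direct appeal to a known tabulated integral expressing $\int_0^y t^{m}Q_\nu(t,b)\,\mathrm{d}t$ in terms of Marcum-$Q$ and Nuttall-$Q$ functions. Substituting $a=\sqrt{u}$ (so $u^{N-M+j}\,\mathrm{d}u = 2a^{2(N-M+j)+1}\,\mathrm{d}a$ on $[0,1]$) produces exactly the combination $Q_{N-M+1}(1,\sqrt z)$ minus a $z^{(N-M+1)/2}$-weighted difference of Nuttall-$Q$ functions $Q_{2(N-M+j+1),\,N-M+1}(\sqrt z,\cdot)$ evaluated at the two endpoints $0$ and $1$; collecting the $\binom{M-2}{j}/(N-M+j+1)$ coefficients from the binomial expansion and the $1/B(N-M+1,M-1)$ prefactor yields \eqref{cdfyyyyyy}.

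The main obstacle is the closed-form evaluation of the finite-order moment integral of the Marcum-$Q$ function, i.e.\ establishing the identity that converts $\int_0^1 a^{2(N-M+j)+1}Q_{N-M+1}(a,\sqrt z)\,\mathrm{d}a$ into the Marcum/Nuttall combination above. This is where care is needed: one must pick the correct integration-by-parts scheme (or the correct entry among several similar-looking tabulated integrals), track the boundary terms at $a=0$ and $a=1$ correctly — in particular that the $a=1$ boundary is what produces the $Q_{N-M+1}(1,\sqrt z)$ term and the endpoint values of the Nuttall-$Q$ function — and verify that the $a\to 0$ contributions vanish (which they do because of the $a^{2(N-M+j)+1}$ factor with $N-M+j+1\ge 1$). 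Everything else — the $\theta_i=0$ reduction for $i\ge 2$, the binomial expansion, and assembling the prefactors — is routine bookkeeping.
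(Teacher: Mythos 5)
Your route is essentially the paper's: for $i\ge 2$ the $\theta_i=0$ reduction to a scaled central chi-squared (incomplete-Gamma) CDF, and for $i=1$ averaging the conditional Marcum-$Q$ CDF over the $\text{Beta}(N-M+1,M-1)$ law of $\theta_1$, expanding $(1-u)^{M-2}$ binomially, and evaluating $\int_0^1 u^{a}Q_m(\sqrt{u},b)\,du$ by parts via the Marcum-$Q$ derivative identity, with the $u=1$ boundary term giving $Q_m(1,b)$ and the remaining Bessel-type integral split as $\int_0^\infty-\int_1^\infty$ to produce the Nuttall-$Q$ difference at second arguments $0$ and $1$ — exactly the paper's Appendix C. The only caveat is that you optionally defer the moment integral to ``a known tabulated integral,'' whereas the paper derives it from scratch and claims the identity is new, so the integration-by-parts derivation you outline is the path that must actually be carried out; otherwise the argument matches.
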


\begin{proof}
Please refer to Appendix \ref{appcdfclosform} for the detailed proof.
\end{proof}

The CDF $F_{\mathcal{Y}_{1}}(\cdot)$ for the Rician-faded $1$st stream, as given in (\ref{cdfyyyyyy}), is provided in a closed form in terms of finite sum series of the Marcum-$Q$ and Nuttall-$Q$ functions. Unfortunately, the Nuttall-$Q$ function with arbitrary parameters is not included as a standard build-in function in most popular mathematical software platforms. Nevertheless, the certain Nuttall-$Q$ functions in (\ref{cdfyyyyyy}) admit alternative closed-form solutions, which are presented, respectively, as \cite[Eqs. (6) and (7)]{c:Sofotasiosnuttall}
\begin{align}
\nonumber
&Q_{2(N-M+j+1),N-M+1}\left(\sqrt{\frac{(K+1)x}{d^{-\alpha_{1}}_{1}}},0\right)=\\
\nonumber
&\frac{\Gamma\left(\frac{3(N-M+1)+2j+1}{2}\right)\left(\frac{(K+1)x}{d^{-\alpha_{1}}_{1}}\right)^{\frac{N-M+1}{2}}}{2^{\frac{M-N-2j-2}{2}}\Gamma(N-M+2)\exp\left(\frac{(K+1)x}{2d^{-\alpha_{1}}_{1}}\right)}\\
&\times \textstyle {}_1F_{1}\left(\frac{3(N-M+1)+2j+1}{2},N-M+2;\frac{(K+1)x}{2d^{-\alpha_{1}}_{1}}\right),
\label{nuttall1}
\end{align} 
and\footnote{The expression in (\ref{nuttall2}) is valid only when $N$ and $M$ are both even numbers, such that the parameter $N-M+1$ becomes odd. Indeed, this assumption meets practical considerations since most antenna array architectures rely on an even number of antennas. This occurs to accommodate hybrid couplers and power dividers/splitters or to formulate input/output ports of selective matrices (e.g., Butler matrices) for beamforming, e.g., see \cite[\S 6]{b:Balanis}.}
\begin{align}
\nonumber
&Q_{2(N-M+j+1),N-M+1}\left(\sqrt{\frac{(K+1)x}{d^{-\alpha_{1}}_{1}}},1\right)=\\
\nonumber
&\sum^{j+\frac{N-M}{2}+1}_{l=1}\frac{2^{j+\frac{N-M}{2}-l+1}(j+\frac{N-M}{2})!\binom{j+\frac{5(N-M)}{2}}{j+\frac{N-M}{2}-l+1}}{(l-1)!}\\
\nonumber
&\times \left(\frac{(K+1)x}{d^{-\alpha_{1}}_{1}}\right)^{\frac{N-M+2 l-1}{2}}Q_{N-M+l+1}\left(\sqrt{\frac{(K+1)x}{d^{-\alpha_{1}}_{1}}},1\right)\\
\nonumber
&+\exp\left(-\frac{\left(\frac{(K+1)x}{d^{-\alpha_{1}}_{1}}+1\right)}{2}\right)\sum^{j+\frac{N-M}{2}}_{l=1}\sum^{j+\frac{N-M}{2}-l}_{s=0}\\
&\times \frac{2^{j+\frac{N-M}{2}-l-s}(j+\frac{N-M}{2}-s-1)!\binom{j+\frac{5(N-M)}{2}}{j+\frac{N-M}{2}-l-s}}{(l-1)!\left(\frac{(K+1)x}{d^{-\alpha_{1}}_{1}}\right)^{\frac{1-l}{2}}}\\
&\times I_{N-M+l}\left(\sqrt{\frac{(K+1)x}{d^{-\alpha_{1}}_{1}}}\right).
\label{nuttall2}
\end{align} 

\begin{lem}
The CDF of SNR for the $i$th stream in (\ref{snr}) is approached by
\begin{align}
F_{\text{SNR}_{i}}(x)\approx F_{\mathcal{Y}_{i}}\left(\frac{\left(\kappa^{2}_{R}M+\frac{N_{0}}{p}+\sigma^{2}M(1+\kappa^{2}_{T})\right) x}{1-\kappa^{2}_{T}x}\right).
\label{cdfy}
\end{align}
\end{lem}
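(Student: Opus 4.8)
The plan is to first reduce the random multiplier $\psi$ in (\ref{psi}) to a deterministic constant, and then to propagate the resulting deterministic, strictly monotone transformation of $\mathcal{Y}_{i}$ through to the CDF by an elementary change of variables.

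First I would observe that $\psi$ in (\ref{psi}) is random only through its last summand $\sigma^{2}(\kappa^{2}_{R}M+N_{0}/p)\Tr[(\mathbf{H}^{\mathcal{H}}\mathbf{H})^{-1}]$; all the remaining terms are deterministic. Since $\sigma\ll 1$ and, moreover, $\Tr[(\mathbf{H}^{\mathcal{H}}\mathbf{H})^{-1}]$ is comparatively small in the regimes of interest (it is $\mathcal{O}(M/N)$, vanishing as $N$ grows), this summand is dominated by the rest, so I would set $\psi\approx\tilde{\psi}\triangleq\kappa^{2}_{R}M+N_{0}/p+\sigma^{2}M(1+\kappa^{2}_{T})$. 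This is the key step: beyond discarding an $\mathcal{O}(\sigma^{2})$ contribution, it also removes the statistical dependence between $\psi$ and $[(\mathbf{H}^{\mathcal{H}}\mathbf{H})^{-1}]_{ii}=1/\mathcal{Y}_{i}$, without which a pointwise inversion of (\ref{snr}) would not be available.

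With $\psi$ replaced by the constant $\tilde{\psi}$, (\ref{snr}) becomes $\text{SNR}_{i}\approx g(\mathcal{Y}_{i})$ where $g(y)\triangleq y/(\kappa^{2}_{T}y+\tilde{\psi})$. I would then note that $g$ is strictly increasing on $y\geq 0$, since $g'(y)=\tilde{\psi}/(\kappa^{2}_{T}y+\tilde{\psi})^{2}>0$, with range $[0,1/\kappa^{2}_{T})$ when $\kappa_{T}>0$ and $[0,\infty)$ when $\kappa_{T}=0$, so it is invertible on its range with $g^{-1}(x)=\tilde{\psi}x/(1-\kappa^{2}_{T}x)$. Hence, for $x$ in the admissible range, $F_{\text{SNR}_{i}}(x)=\text{Pr}[g(\mathcal{Y}_{i})\leq x]=\text{Pr}[\mathcal{Y}_{i}\leq g^{-1}(x)]=F_{\mathcal{Y}_{i}}\!\left(\tilde{\psi}x/(1-\kappa^{2}_{T}x)\right)$, which is precisely (\ref{cdfy}); for $x\geq 1/\kappa^{2}_{T}$ both sides equal unity. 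The CDF $F_{\mathcal{Y}_{i}}(\cdot)$ on the right-hand side is the unconditional one already derived in (\ref{cdfyyyyyy})--(\ref{cdfyyyy}), so no further work is needed there.

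The main obstacle is the very first step: making the approximation $\psi\approx\tilde{\psi}$ defensible, i.e., arguing that the CDF perturbation it induces is of the same small order $\mathcal{O}(\sigma^{2})$ as the terms already retained (and is further suppressed by the smallness of $\Tr[(\mathbf{H}^{\mathcal{H}}\mathbf{H})^{-1}]$). Once $\psi$ is frozen, the remainder — the monotone change of variables and the inversion — is entirely elementary. I also expect the statement to be meaningful only for $x<1/\kappa^{2}_{T}$, where the right-hand side of (\ref{cdfy}) is finite and nonnegative, so I would state that domain restriction explicitly.
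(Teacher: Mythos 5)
Your overall route is the same as the paper's: write $\text{SNR}_{i}=\mathcal{Y}_{i}/(\kappa^{2}_{T}\mathcal{Y}_{i}+\psi)$, invert the monotone map to get $\Pr[\mathcal{Y}_{i}\leq \psi x/(1-\kappa^{2}_{T}x)]$ for $x<1/\kappa^{2}_{T}$, and then replace $\psi$ by the deterministic constant $\tilde{\psi}=\kappa^{2}_{R}M+N_{0}/p+\sigma^{2}M(1+\kappa^{2}_{T})$ by discarding the term $\sigma^{2}(\kappa^{2}_{R}M+N_{0}/p)\Tr[(\mathbf{H}^{\mathcal{H}}\mathbf{H})^{-1}]$. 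Your remark that freezing $\psi$ first is what legitimizes reading the probability as $F_{\mathcal{Y}_{i}}$ evaluated at a point (since $\psi$ and $\mathcal{Y}_{i}$ are dependent through $\mathbf{H}$) is a fair point of exposition; the paper performs the inversion with the random $\psi$ in place and only then relaxes it, which is pathwise valid but less tidy.

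The genuine gap is the step you yourself flag as the obstacle and then leave as an assertion: that $\Tr[(\mathbf{H}^{\mathcal{H}}\mathbf{H})^{-1}]$ is small, which you justify only by claiming it is $\mathcal{O}(M/N)$. For the rank-$1$ Rician model this is not immediate, because $\mathbf{H}^{\mathcal{H}}\mathbf{H}$ is a \emph{non-central} Wishart-type Gramian and the distribution of the inverse non-central Wishart (hence of $\Tr[(\mathbf{H}^{\mathcal{H}}\mathbf{H})^{-1}]$) is not available, so the familiar central-Wishart trace formulas cannot be invoked directly. The paper closes exactly this hole: it approximates the Gramian in distribution by a central Wishart $\mathbf{Z}\overset{\text{d}}=\mathcal{CW}_{M}\bigl(N-M+i,\diag\{d^{-a_{i}}_{i}/(K+1)\}^{M}_{i=1}+\tfrac{1}{N}\mathbf{H}^{\mathcal{H}}_{d}\mathbf{H}_{d}\bigr)$ (moment-matched, with the mismatch in variance of order $\mathcal{O}(N^{-1})$, citing Tan--Gupta and the numerical validation in the rank-$1$ setting), and then uses the known results $\mathbb{E}[\Tr[\mathbf{Z}^{-1}]]=(M-i+1)/(N-M+i-1)$ for $N>M$ and the corresponding variance formula to conclude that the neglected summand of $\psi$ is negligible relative to the retained $\mathcal{O}(\sigma^{2})$ terms. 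Without some argument of this kind (or another bound on the trace valid under the non-central law), your first step rests on an unproved claim, so the proposal as written is incomplete precisely at its load-bearing point; the subsequent change of variables is, as you say, elementary and matches the paper.
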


\begin{proof}
The proof is provided in Appendix \ref{appcdfy}.
\end{proof}

Collecting the aforementioned statistical results, we summarize the following insightful observations:

\begin{rem}
The computation of (\ref{cdfyyyyyy}) is presented in a closed-form solution in terms of finite sum series of the Marcum-$Q$ function, Kummer's confluent hypergeometric function and modified Bessel function of the first kind, which all are included as standard build-in functions in most popular mathematical software platforms. Similar non closed-form representations of (\ref{cdfyyyyyy}) in terms of infinite series including special functions have been reported in \cite{j:Siriteanurank0ne,siriteanu2016chi,j:siriteanuexact2014,j:SiriteanuMiyanaga2012,j:Siriteanurician2014}. On the other hand, the derived result in (\ref{cdfyyyyyy}) is provided in an exact closed form, whereas it is accurate and computationally efficient.
\label{remm1}
\end{rem}
}

\begin{rem}
The CDF of SNR for each received stream given in (\ref{cdfy}) is exact in the case when hardware impairments occur at both the transmitter and receiver sides under perfect channel estimation. When imperfect channel estimation conditions occur, (\ref{cdfy}) approximates the actual SNR for each stream (when ZF or ZF-SIC is applied at the receiver). The approximation error is marginal for reasonable ranges of channel estimation imperfection and/or signal distortion due to hardware impairments (i.e., when $\{\kappa^{2}_{T},\kappa^{2}_{R},\sigma^{2}\}\ll 1$). This approximation error vanishes for $N\rightarrow \infty$, i.e., in massive MIMO antenna systems.
\label{rem1}
\end{rem}

At this point, it is noteworthy that mmWave transmission results to higher Doppler spreads for a given user velocity. Nevertheless, mmWave communication systems are expected to operate on a relatively short range (e.g., femto- and/or pico-cell coverage areas) due to their increased path loss. Consequently, this reflects to a rather low user mobility and thus a degrease of the corresponding velocity by an order of magnitude as compared to conventional systems, operating near $2-3$ GHz. In addition, the specific scenario of mmWave transmission (i.e., LOS or near-LOS propagation) under massive MIMO infrastructures results to an increase in coherence bandwidth with the aid of the so-called `\emph{pencil-beam}' communication \cite{j:LuSwindlehurst2014}. In such an environment, these communication systems do not require a significant increase in channel update rates, thereby, resulting to marginal errors in channel estimation \cite[\S VIII.B.2]{j:LuSwindlehurst2014}.

\section{Performance Metrics}
\label{Performance Metrics}
The outage probability and ergodic capacity for each stream are analytically presented into this section.

\subsection{Outage Probability}
Outage probability of the $i$th stream ($1\leq i\leq M$), $P^{(i)}_{\text{out}}(\gamma_{\text{th}})$, is defined as the probability that the SNR of the $i$th stream falls below a certain threshold value $\gamma_{\text{th}}\triangleq 2^{\mathcal{R}}-1$, where $\mathcal{R}$ stands for a given data transmission rate in bps/Hz.

{\color{black}\begin{prop}
Outage probability of the Rician-faded transmitted stream (i.e., $i=1$) is directly obtained from (\ref{cdfy}), (\ref{nuttall1}), (\ref{nuttall2}) and (\ref{cdfyyyyyy}) as
\begin{align}
\nonumber
&P^{(1)}_{\text{out}}(\gamma_{\text{th}})=1-\frac{1}{B(N-M+1,M-1)}\sum^{M-2}_{j=0}\frac{\binom{M-2}{j}}{(N-M+j+1)}\\
\nonumber
&\times \vast\{Q_{N-M+1}\left(1,\sqrt{\frac{\left(\frac{\left(\kappa^{2}_{R}M+\frac{N_{0}}{p}+\sigma^{2}M(1+\kappa^{2}_{T})\right) \gamma_{\text{th}}}{1-\kappa^{2}_{T}\gamma_{\text{th}}}\right)}{(K+1)^{-1}d^{-\alpha_{i}}_{i}}}\right)\\
\nonumber
&-\left(\frac{\left(\frac{\left(\kappa^{2}_{R}M+\frac{N_{0}}{p}+\sigma^{2}M(1+\kappa^{2}_{T})\right) \gamma_{\text{th}}}{1-\kappa^{2}_{T}\gamma_{\text{th}}}\right)}{(K+1)^{-1}d^{-\alpha_{i}}_{i}}\right)^{\frac{N-M+1}{2}}\\
\nonumber
&\times \vast[Q_{2(N-M+j+1),N-M+1}\left(\sqrt{\frac{\left(\frac{\left(\kappa^{2}_{R}M+\frac{N_{0}}{p}+\sigma^{2}M(1+\kappa^{2}_{T})\right) \gamma_{\text{th}}}{1-\kappa^{2}_{T}\gamma_{\text{th}}}\right)}{(K+1)^{-1}d^{-\alpha_{i}}_{i}}},0\right)\\
&-Q_{2(N-M+j+1),N-M+1}\left(\sqrt{\frac{\left(\frac{\left(\kappa^{2}_{R}M+\frac{N_{0}}{p}+\sigma^{2}M(1+\kappa^{2}_{T})\right) \gamma_{\text{th}}}{1-\kappa^{2}_{T}\gamma_{\text{th}}}\right)}{(K+1)^{-1}d^{-\alpha_{i}}_{i}}},1\right)\vast]\vast\}.
\label{outrice}
\end{align}
Moreover, outage probability of the Rayleigh-faded $i$th transmitted stream ($2\leq i\leq M$) is directly obtained from (\ref{cdfy}) and (\ref{cdfyyyy}) as
\begin{align}
P^{(i)}_{\text{out}}(\gamma_{\text{th}})=1-\frac{\Gamma\left(N-M+i,\frac{\left(\kappa^{2}_{R}M+\frac{N_{0}}{p}+\sigma^{2}M(1+\kappa^{2}_{T})\right) \gamma_{\text{th}}}{2(K+1)^{-1}d^{-\alpha_{i}}_{i}(1-\kappa^{2}_{T}\gamma_{\text{th}})}\right)}{\Gamma(N-M+i)}.
\label{out}
\end{align}
\end{prop}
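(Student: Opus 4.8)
The plan is to read off both formulas as a direct specialisation of the SNR‑CDF already established, since outage is by definition $P^{(i)}_{\text{out}}(\gamma_{\text{th}})=\Pr[\text{SNR}_{i}<\gamma_{\text{th}}]=F_{\text{SNR}_{i}}(\gamma_{\text{th}})$ for every $1\le i\le M$. First I would substitute $x=\gamma_{\text{th}}$ into (\ref{cdfy}), which converts the statement about $\text{SNR}_{i}$ into one about the tractable variable $\mathcal{Y}_{i}=1/[(\mathbf{H}^{\mathcal{H}}\mathbf{H})^{-1}]_{ii}$:
\[
P^{(i)}_{\text{out}}(\gamma_{\text{th}})\approx F_{\mathcal{Y}_{i}}\!\left(\frac{\big(\kappa^{2}_{R}M+\tfrac{N_{0}}{p}+\sigma^{2}M(1+\kappa^{2}_{T})\big)\,\gamma_{\text{th}}}{1-\kappa^{2}_{T}\gamma_{\text{th}}}\right).
\]
The monotone inversion underlying (\ref{cdfy})/Appendix~\ref{appcdfy} — namely $\text{SNR}_{i}<\gamma_{\text{th}}\iff\mathcal{Y}_{i}<\tfrac{(\kappa^{2}_{R}M+N_{0}/p+\sigma^{2}M(1+\kappa^{2}_{T}))\gamma_{\text{th}}}{1-\kappa^{2}_{T}\gamma_{\text{th}}}$ — is valid only on the admissible range $0\le\kappa^{2}_{T}\gamma_{\text{th}}<1$; I would therefore first dispose of the degenerate regime $\kappa^{2}_{T}\gamma_{\text{th}}\ge 1$, in which the SNR ceiling $1/\kappa^{2}_{T}$ forces $P^{(i)}_{\text{out}}=1$ trivially.

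Next I would split according to the two branches of Lemma~2, i.e.\ (\ref{cdfyyyyyy}) and (\ref{cdfyyyy}). For the Rician‑faded stream ($i=1$) I substitute, for every occurrence of the normalised argument $\tfrac{(K+1)x}{d^{-\alpha_{1}}_{1}}$ inside (\ref{cdfyyyyyy}), the composite quantity $\tfrac{1}{(K+1)^{-1}d^{-\alpha_{1}}_{1}}\cdot\tfrac{(\kappa^{2}_{R}M+N_{0}/p+\sigma^{2}M(1+\kappa^{2}_{T}))\gamma_{\text{th}}}{1-\kappa^{2}_{T}\gamma_{\text{th}}}$; the Marcum‑$Q$/Nuttall‑$Q$ skeleton and the finite sum over $j$ are inherited unchanged, which is precisely (\ref{outrice}), and if a build‑in‑function representation is preferred one further replaces the two Nuttall‑$Q$ terms using (\ref{nuttall1}) and (\ref{nuttall2}). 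For the Rayleigh‑faded streams ($2\le i\le M$) I instead use the closed‑form branch of (\ref{cdfyyyy}), $F_{\mathcal{Y}_{i}}(x)=1-\Gamma\!\big(N-M+i,\tfrac{(K+1)x}{2d^{-\alpha_{i}}_{i}}\big)/\Gamma(N-M+i)$ with $\theta_{i}=0$, and perform the same substitution; absorbing the constant $\tfrac{K+1}{2d^{-\alpha_{i}}_{i}}$ into the $\gamma_{\text{th}}$‑fraction reproduces (\ref{out}) immediately.

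The step that demands the most care is not algebraic difficulty but consistent book‑keeping of the nested composition in the Rician case: the change of variable $x\mapsto\tfrac{(\kappa^{2}_{R}M+N_{0}/p+\sigma^{2}M(1+\kappa^{2}_{T}))x}{1-\kappa^{2}_{T}x}$ must be applied to \emph{every} appearance of the SNR‑normalised quantity in (\ref{cdfyyyyyy}) — the second argument of $Q_{N-M+1}$, the prefactor $\big(\tfrac{(K+1)x}{d^{-\alpha_{1}}_{1}}\big)^{(N-M+1)/2}$, and both Nuttall‑$Q$ first arguments (and, after using (\ref{nuttall1})–(\ref{nuttall2}), the Kummer and Bessel arguments and the inner summation indices) — so that no residual dependence on $x$ is left dangling. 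I would close by noting, in line with Remark~\ref{rem1}, that because (\ref{cdfy}) is exact whenever $\sigma=0$, the identities (\ref{outrice}) and (\ref{out}) are exact under perfect channel estimation and tight otherwise, and that setting $\kappa_{T}=\kappa_{R}=\sigma=0$ (together with the Rayleigh‑only limit for the $i=1$ stream) collapses them to the classical ZF/ZF‑SIC outage expressions, which serves as a consistency check.
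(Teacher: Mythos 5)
Your proposal is correct and follows exactly the route the paper takes: the proposition is obtained by noting $P^{(i)}_{\text{out}}(\gamma_{\text{th}})=F_{\text{SNR}_{i}}(\gamma_{\text{th}})$ and substituting the argument $\frac{(\kappa^{2}_{R}M+N_{0}/p+\sigma^{2}M(1+\kappa^{2}_{T}))\gamma_{\text{th}}}{1-\kappa^{2}_{T}\gamma_{\text{th}}}$ from (\ref{cdfy}) into the two CDF branches (\ref{cdfyyyyyy}) and (\ref{cdfyyyy}), with the caveat $\gamma_{\text{th}}<1/\kappa^{2}_{T}$ handled as in the paper. Your additional remarks on the validity range and the ideal-system consistency check align with Remark \ref{rem1} and the text following the proposition, so nothing further is needed.
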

Notice that $\gamma_{\text{th}}<1/\kappa^{2}_{T}$ should hold in (\ref{outrice}) and (\ref{out}), which is usually the case\footnote{As an illustrative example, typical values of $\kappa^{2}_{T}$ and $\kappa^{2}_{R}$ in long-term evolution (LTE) infrastructures \cite[Section 14.3.4]{b:holma2011lte} are in the range of $[0.0064-0.0306]$.} in practical communication systems; otherwise, $P^{(i)}_{\text{out}}(\gamma_{\text{th}})=1$.
}

\subsection{Ergodic Capacity}
The ergodic capacity of the $i$th stream is defined as the statistical mean of the instantaneous mutual information between the corresponding transmitter and receiver (in bps/Hz). It is explicitly defined as
\begin{align}
\nonumber
\overline{C}_{i}&\triangleq \mathbb{E}[\log_{2}(1+\text{SNR}_{i})]\\
&=\frac{1}{\text{ln(2)}}\int^{1/\kappa^{2}_{T}}_{0}\frac{(1-F_{\text{SNR}_{i}}(x))}{1+x}dx.
\label{cdef}
\end{align}
Unfortunatelly, (\ref{cdef}) does not have an analytical solution for the general case and can only be calculated via numerical integration. The particular case with a non-impaired hardware at the transmitter side (i.e., when $\kappa_{T}=0$) is analytically presented as follows.

{\color{black}\begin{prop}
The normalized\footnote{For ease of presentation, we normalize the ergodic capacity with the factor $1/\text{ln}(2)\approx 1.442695$.} ergodic capacity of the $1$st (Rician-faded) transmitted stream is given by
\begin{align}
\nonumber
\overline{C}_{1}&=\sum^{\infty}_{k=0}\sum^{N-M+k}_{j=0}\left(\frac{\left(\kappa^{2}_{R}M+\frac{N_{0}}{p}+\sigma^{2}M\right)}{2(K+1)^{-1}d^{-\alpha_{1}}_{1}}\right)^{j}\\
\nonumber
&\times \frac{\exp\left(\frac{\left(\kappa^{2}_{R}M+\frac{N_{0}}{p}+\sigma^{2}M\right)}{2(K+1)^{-1}d^{-\alpha_{1}}_{1}}\right)\Gamma\left(-j,\frac{\left(\kappa^{2}_{R}M+\frac{N_{0}}{p}+\sigma^{2}M\right)}{2(K+1)^{-1}d^{-\alpha_{1}}_{1}}\right)}{k!2^{k}B(N-M+1,M-1)}\\
&\times \sum^{M-2}_{r=0}\frac{\binom{M-2}{r}\gamma\left(N-M+k+r+1,\frac{1}{2}\right)}{2^{M-N-k-r-1}}.
\label{ergcap}
\end{align}
As indicated in the next section, (\ref{ergcap}) is a rapidly converging series for reasonable (i.e. practical) accuracy levels of the achievable data rate, becoming an efficient approach. Also, the corresponding normalized ergodic capacity of the $i$th Rayleigh-faded transmitted stream ($2\leq i\leq M$) is given by
\begin{align}
\nonumber
\overline{C}_{i}&=\sum^{N-M+i-1}_{j=0}\frac{\left(\frac{\left(\kappa^{2}_{R}M+\frac{N_{0}}{p}+\sigma^{2}M\right)}{2(K+1)^{-1}d^{-\alpha_{i}}_{i}}\right)^{j}}{\exp\left(-\frac{\left(\kappa^{2}_{R}M+\frac{N_{0}}{p}+\sigma^{2}M\right)}{2(K+1)^{-1}d^{-\alpha_{i}}_{i}}\right)}\\
&\times \Gamma\left(-j,\frac{\left(\kappa^{2}_{R}M+\frac{N_{0}}{p}+\sigma^{2}M\right)}{2(K+1)^{-1}d^{-\alpha_{i}}_{i}}\right).
\label{ergcapray}
\end{align}
\end{prop}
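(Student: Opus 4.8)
The plan is to evaluate the capacity integral (\ref{cdef}) in the case $\kappa_T=0$, where the upper limit $1/\kappa^2_T$ becomes $+\infty$, by substituting the complementary CDF furnished by Lemma~3. Putting $c\triangleq \kappa^2_R M+N_0/p+\sigma^2 M$, Eq.~(\ref{cdfy}) gives $1-F_{\text{SNR}_i}(x)=1-F_{\mathcal{Y}_i}(cx)$, so that
\begin{align}
\overline{C}_i=\int_{0}^{\infty}\frac{1-F_{\mathcal{Y}_i}(cx)}{1+x}\,dx
\label{sketchcbar}
\end{align}
for the normalised capacity $\overline{C}_i$ of the statement. The two cases of the proposition then match the two cases of Lemmas~1--2: for $2\le i\le M$ the RV $\mathcal{Y}_i$ is a scaled central chi-squared, whereas for $i=1$ it is a scaled central chi-squared whose ``non-centrality'' is the Beta RV $\theta_1$. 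The single identity I would use throughout is the tabulated integral \cite[Eq.~(3.383.10)]{tables}, which for an integer exponent reduces to
\begin{align}
\int_{0}^{\infty}\frac{x^{j}\,e^{-\beta x}}{1+x}\,dx=j!\,e^{\beta}\,\Gamma(-j,\beta),\qquad j\in\mathbb{N}.
\label{sketchint}
\end{align}

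For the Rayleigh-faded streams $2\le i\le M$ I would insert (\ref{cdfyyyy}) into (\ref{sketchcbar}) and expand the upper incomplete Gamma function with integer first argument, namely $\Gamma(n,z)/\Gamma(n)=e^{-z}\sum_{m=0}^{n-1}z^{m}/m!$, with $n=N-M+i$ and $z=b x$, $b\triangleq (K+1)c/(2d^{-\alpha_i}_{i})$. Interchanging the finite sum with the integral and applying (\ref{sketchint}) with $\beta=b$, the factor $j!$ cancels the $1/j!$ from the incomplete-Gamma expansion, and one lands on (\ref{ergcapray}) directly. This case is \emph{exact}: (\ref{cdfyyyy}) is exact, and the only approximation invoked, Lemma~3, is itself exact when $\sigma=0$ (cf.\ Remark~\ref{rem1}).

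For the Rician-faded stream $i=1$ I would \emph{not} start from the closed form (\ref{cdfyyyyyy}), since integrating its Marcum-$Q$ and Nuttall-$Q$ terms against $1/(1+x)$ is not tractable. Instead I would return to the conditional law of Lemma~1: conditioned on $\theta_1$, $\mathcal{Y}_1$ is a scaled non-central chi-squared, so its complementary CDF admits the Poisson-mixture expansion
\begin{align}
1-F_{\mathcal{Y}_1|\theta_1}(cx)=e^{-\theta_1/2}\sum_{k=0}^{\infty}\frac{(\theta_1/2)^{k}}{k!}\,\frac{\Gamma\left(N-M+1+k,\,b x\right)}{\Gamma(N-M+1+k)},
\label{sketchpoisson}
\end{align}
with $b\triangleq (K+1)c/(2d^{-\alpha_1}_{1})$. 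Expanding each incomplete Gamma (now with integer parameter $N-M+1+k$) as above produces the inner sum $0\le j\le N-M+k$. I would then average (\ref{sketchpoisson}) over $\theta_1\sim\text{Beta}(N-M+1,M-1)$ term by term: the only remaining $\theta_1$-dependence is $\theta_1^{k}e^{-\theta_1/2}$ weighted by the Beta density, and a binomial expansion of its $(1-\theta_1)^{M-2}$ factor together with $\int_{0}^{1}\theta^{a}e^{-\theta/2}\,d\theta=2^{a+1}\gamma(a+1,\tfrac12)$ collapses this into the finite $r$-sum of lower incomplete Gamma functions $\gamma(N-M+k+r+1,\tfrac12)$ appearing in (\ref{ergcap}). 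Finally the $x$-integral is carried out with (\ref{sketchint}) for $\beta=b$; reassembling the three sums and the constants yields (\ref{ergcap}), which again becomes exact for $\sigma=0$.

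The hard part is the Rician stream, where three points need care. First, interchanging the infinite $k$-series with the $x$-integral and with the $\theta_1$-expectation: this is licensed by the positivity of every summand in (\ref{sketchpoisson}) --- each partial sum is dominated by the bounded complementary CDF, and $1/(1+x)$ is integrable against the chi-squared tail --- so monotone convergence applies. Second, one must reconcile the normalisation constants linking the conditional PDF (\ref{pdfy}), the Marcum-$Q$ convention and the argument scaling in Lemma~3, since a stray factor of two there would propagate into every term. Third, one must pin down the precise combinatorial weights, in particular that the $\binom{M-2}{r}$ in (\ref{ergcap}) enter without alternating signs. For the ``rapidly converging'' claim I would bound the $k$-th term of (\ref{ergcap}) using $\gamma(\cdot,\tfrac12)\le\Gamma(\cdot)$ and the monotonicity of $\Gamma(-j,\cdot)$, exhibiting the Poisson-type decay $1/(2^{k}k!)$ modulated by at most polynomial growth in $k$, i.e.\ super-geometric convergence, so that a handful of terms suffices for practical accuracy.
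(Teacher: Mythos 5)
Your route is essentially the paper's own: the Poisson-mixture expansion you invoke for the conditional non-central chi-squared survival function is exactly the Marcum-$Q$ series (\ref{qexpansion}) used in Appendix E, the $x$-integral is the same \cite[Eq.~(3.383.10)]{tables} evaluation yielding $j!\,e^{\beta}\Gamma(-j,\beta)$, and the Beta-averaging via binomial expansion together with $\int_0^1 u^{s}e^{-u/2}\,du=2^{s+1}\gamma\left(s+1,\tfrac{1}{2}\right)$ is precisely the ``straightforward algebra'' the paper leaves implicit, so the proposal is correct and follows the same approach. Your third caution is well placed, but it points at the paper rather than at you: the binomial expansion of $(1-\theta_1)^{M-2}$ necessarily produces $(-1)^r\binom{M-2}{r}$, so the $r$-sum must alternate in sign (a quick numerical check of $\int_0^1 u^{A}(1-u)e^{-u/2}du$ for $M=3$ confirms this), and the missing $(-1)^r$ in the printed (\ref{ergcap}) appears to be a typo rather than a weight your derivation needs to reproduce.
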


\begin{proof}
The proof is presented in Appendix \ref{appergcap}.
\end{proof}
}

Furthermore, it trivially follows that the sum-capacity of the entire system is defined as
\begin{align}
\overline{C}_{\sum}\triangleq \sum^{M}_{i=1}\overline{C}_{i},
\label{ergcapsum}
\end{align}
which is reduced to $\overline{C}_{\sum}=\overline{C}_{1}+(M-1)\overline{C}_{2}$ (due to the symmetry amongst the Rayleigh-faded received streams), in the case of the conventional ZF (non-SIC) reception.

From an engineering standpoint, the following remarks summarize some useful outcomes:

\begin{rem}
The outage probability and ergodic capacity of the $i$th stream for an ideal ZF-SIC communication system (i.e., with perfect channel estimates and no hardware impairments) are directly obtained from (\ref{outrice}), (\ref{out}), (\ref{ergcap}) and (\ref{ergcapray}), respectively, by setting $\{\sigma,\kappa_{T},\kappa_{R}\}=0$.
\label{rem2}
\end{rem}

\begin{rem}
In the high SNR regime, when $p/N_{0}\rightarrow +\infty$, the $N_{0}/p$ term can be neglected from (\ref{outrice}) and (\ref{out}), denoting a \emph{non-zero} outage floor. This is in contrast to the ideal scenario (i.e., $\{\sigma,\kappa_{T},\kappa_{R}\}=0$), where the outage probability of each stream is vanished (asymptotically tend to zero). Similarly, using the mentioned argument in (\ref{ergcap}) and (\ref{ergcapray}), an upper bound on the achievable data rate for each stream is obtained.
\label{rem3}
\end{rem}

\begin{rem}
By using the standard properties of the Marcum-$Q$ and Nuttall-$Q$ functions \cite[Eq. (2.10)]{j:GilAmparo2014} and \cite[Th. 3]{j:Kapinasnuttall}, we can observe that the outage probability of the Rician-faded stream is a decreasing function with respect to $N-M+i$ (i.e., for less transmitters with a fixed number of receive antennas and/or for more receive antennas for a given number of transmitters) and the Rician-$K$ factor. In addition, it is an increasing function with respect to the system communication imperfections (i.e., channel estimation errors and hardware impairments), as expected. The two parameters of both the Marcum-$Q$ and Nuttall-Q functions in (\ref{outrice}) are both affected by a factor of $1/2$, whereas the corresponding order of these functions are linearly scaled (affected by a unit-factor). Thereby, it turns out that the condition $N\gg M$ plays a more critical role to the system performance than communication imperfections or propagation losses (e.g., variations on the LOS signal power). Similar observations are obtained for the outage probability of the remaining Rayleigh-faded streams by using (\ref{out}) and (\ref{cdfyyyy}).
\label{rem4}
\end{rem}

\section{Numerical Results}
\label{Numerical Results}
In this section, analytical results are presented and compared with Monte Carlo simulations. For ease of tractability and without loss of generality, we assume symmetric levels of impairments at the transceiver, i.e., equal hardware quality at the transmitters and receiver by setting $\kappa_{T}=\kappa_{R}\triangleq \kappa$, $N_{0}=1$, while $\varphi=20^{\circ}$. To gain more insights, we assume a normalized distance for each node denoted as $d=1$ (unless otherwise stated). In addition, the path-loss exponent is set to be $\alpha=4$, corresponding to a classic urban environment \cite[Table 2.2]{b:goldsmith}. The 1st SIC stage corresponds to the detection of Rician-faded stream (see (\ref{hddef})), while all the subsequent ones ($M-1$) correspond to the Rayleigh-faded streams. In all the figures, simulation points and analytical results are denoted by circle-marks and line-curves, respectively. 

It is obvious from Fig. \ref{fig1} that the outage performance reduces for increased Rician-$K$ values. This effect occurs because $\mathbf{H}\rightarrow \mathbf{H}_{d}$ as Rician-$K$ increases (i.e., the LOS component dominate). Both the imperfect channel estimation and hardware impairments dramatically affect the outage probability. The performance of the first SIC stage is illustrated (i.e., when $i=1$), which serves as a lower system performance bound. This is due to the fact that the amount of co-channel interference is reduced (canceled) for each successive stage, and therefore, the corresponding SNR is increased, which in turn shows that the performance of the $j$th stage is enhanced as compared to the $i$th one, where $j>i$ \cite[\S IV]{j:WCLMiridakisVergados2013}.
\begin{figure}[!t]
\centering
\includegraphics[trim=1.5cm 0.3cm 2.5cm 1.2cm, clip=true,totalheight=0.26\textheight]{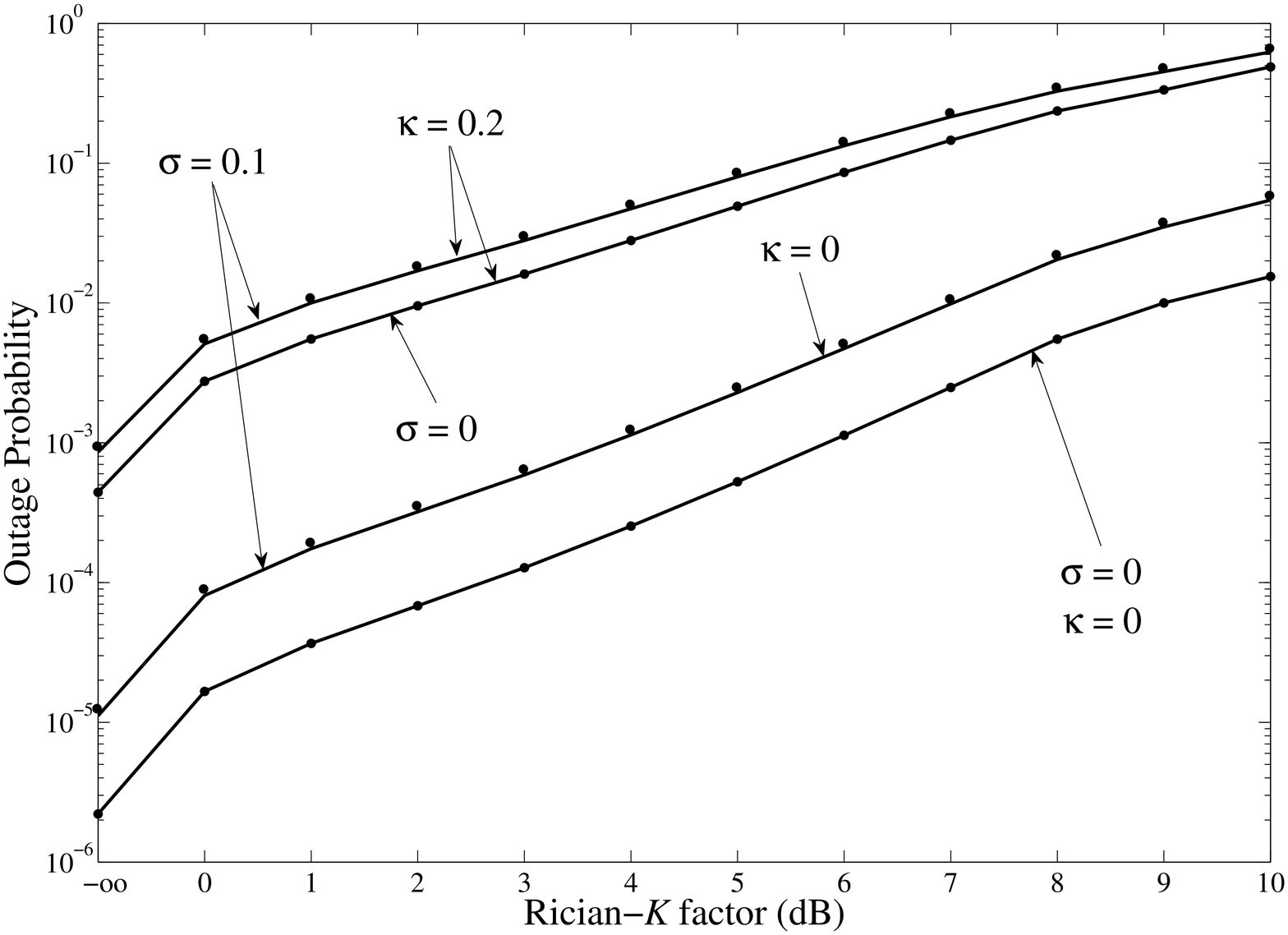}
\caption{Outage probability vs. various values of the Rician-$K$ factor, when $N=8$, $M=4$, $i=1$ (i.e., the 1st SIC stage), $\gamma_{\text{th}}/N_{0}=6$dB, and $p/N_{0}=10$dB.}
\label{fig1}
\end{figure}

Figure \ref{fig2} shows the outage performance for various SNR regions and system imperfections. Notice that only the ideal system setup with no imperfections asymptotically tends to zero. All the other scenarios reach an outage (asymptotic) floor, where no diversity order occurs in these cases. The mentioned outage floor explicitly follows Remark \ref{rem3}; the relevant outage floor curves, however, have not been depicted to avoid clutter. The corresponding array order and outage floor are directly related to the amount of imperfections of the channel estimation and impaired hardware.
\begin{figure}[!t]
\centering
\includegraphics[trim=1.5cm 0.2cm 2.5cm 1.2cm, clip=true,totalheight=0.26\textheight]{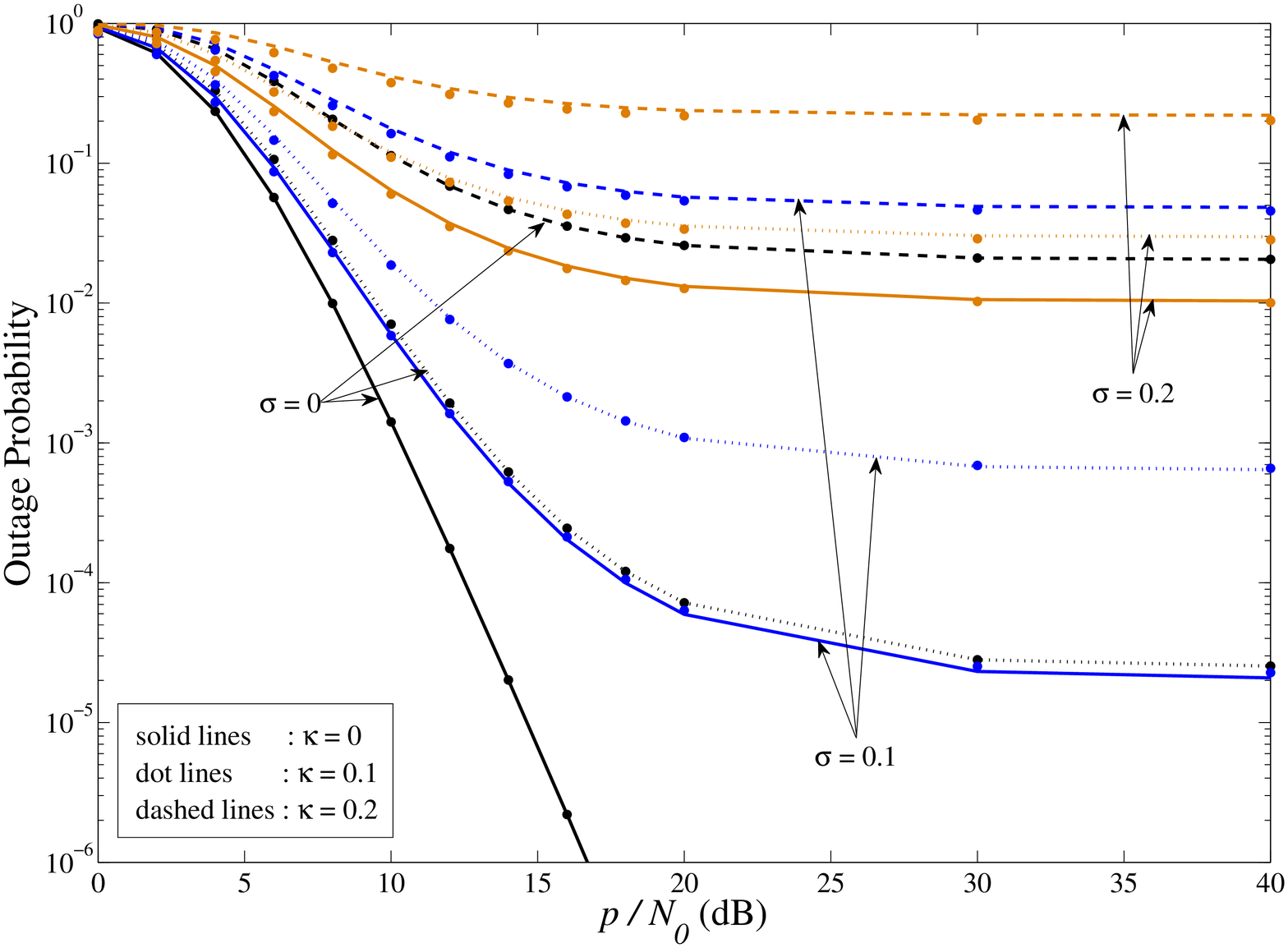}
\caption{Outage probability vs. various values of average SNR, when $N=8$, $M=4$, $i=1$ (i.e., the 1st SIC stage), $\gamma_{\text{th}}/N_{0}=6$dB, and $K=7$dB.}
\label{fig2}
\end{figure}

A beneficial performance improvement arising when adopting the SIC-enabled reception is illustrated in Fig. \ref{fig3}. In particular, the performance of the first SIC stage coincides with the conventional ZF (non-SIC) approach, while the outage performance is enhanced for consecutive SIC stages (as expected). In addition, placing more antennas for reception greatly enhances the overall outage performance, regardless of the amount of system imperfections, which is in agreement with Remark \ref{rem4}.
\begin{figure}[!t]
\centering
\includegraphics[trim=1.5cm 0.2cm 2.5cm 1.2cm, clip=true,totalheight=0.26\textheight]{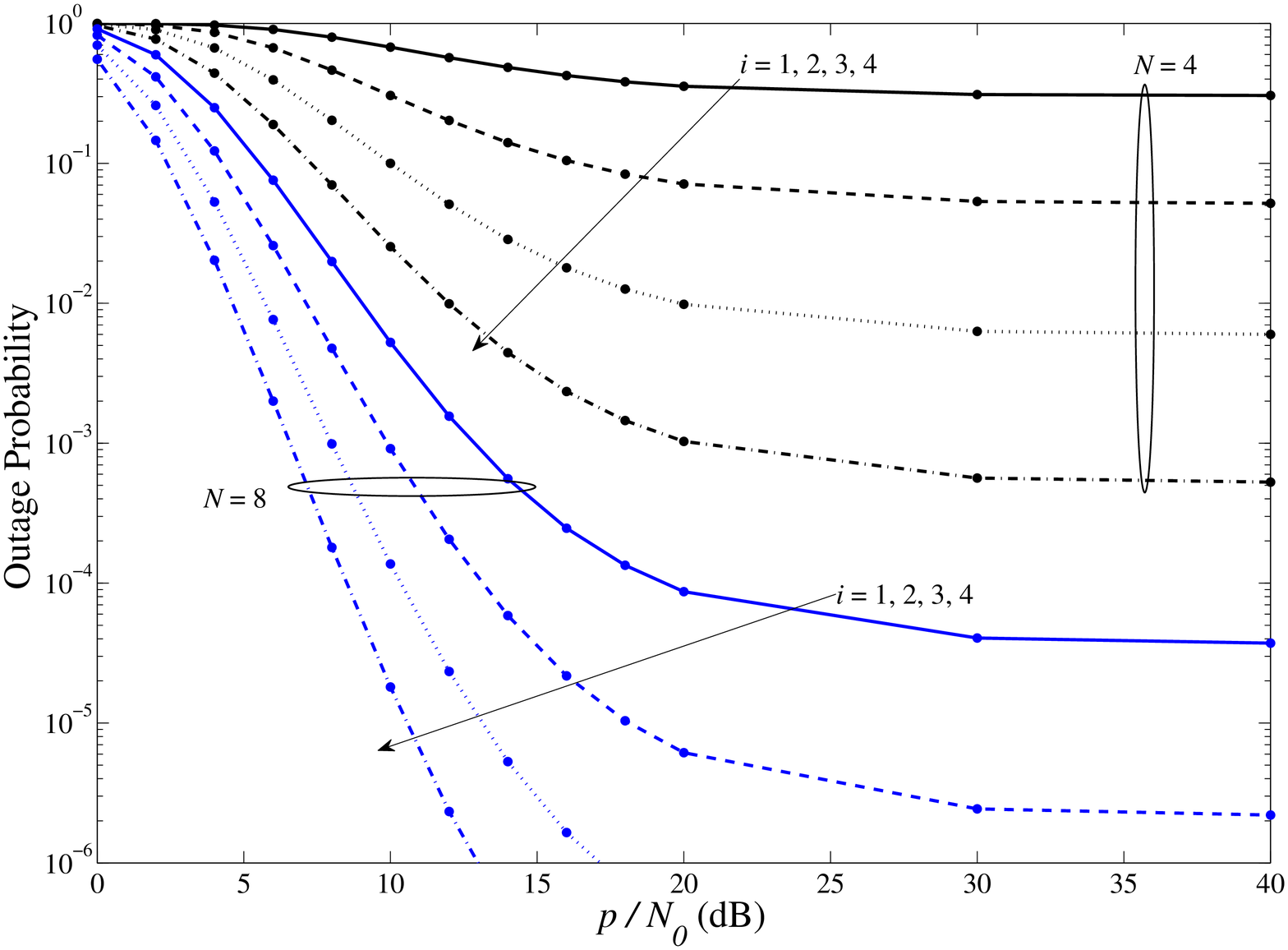}
\caption{Outage probability vs. various values of average SNR, when $M=4$, $\gamma_{\text{th}}/N_{0}=3$dB, and $K=10$dB. Also, $\sigma=0.05$ and $\kappa=0.1$.}
\label{fig3}
\end{figure}

In general, ergodic capacity represents a more solid performance metric than outage probability since it is not application-dependent, i.e., outage performance is directly related to the outage threshold, which can be considered as an application-dependent parameter. Ergodic capacity can be considered as a system-dependent metric since it is related only to the system configuration and networking infrastructure (e.g., number of antennas, modulation scheme, and transmission power). Therefore, the following numerical results use this performance metric.
In addition, outage probability may not be the suitable performance metric for the analysis and evaluation of massive MIMO deployments since it has low values for larger receive antenna arrays. Eventually, outage becomes negligible for practical applications in moderate-to-high SNR regions. It is more convenient and/or preferable to use the average ergodic capacity as an efficient performance tool in massive MIMO systems.

Due to this reason, Fig. \ref{fig6} presents the normalized sum ergodic capacity in very high (yet finite) values of the receive antenna array. An insightful observation is that the presence of a LOS signal affects the overall performance, even when $N\gg M$. As Fig. \ref{fig7} indicates, the performance enhancement produced by a SIC-enabled reception becomes more beneficial when the available $(N-M+i)$ DoF are reduced (e.g., when $N=M$). In summary, the derived analytical result for the ergodic capacity of the Rician-faded signal are based on a fast converging series. An illustrative example is provided in Table \ref{Table}.
\begin{figure}[!t]
\centering
\includegraphics[trim=1.5cm 0.2cm 2.5cm 1.2cm, clip=true,totalheight=0.26\textheight]{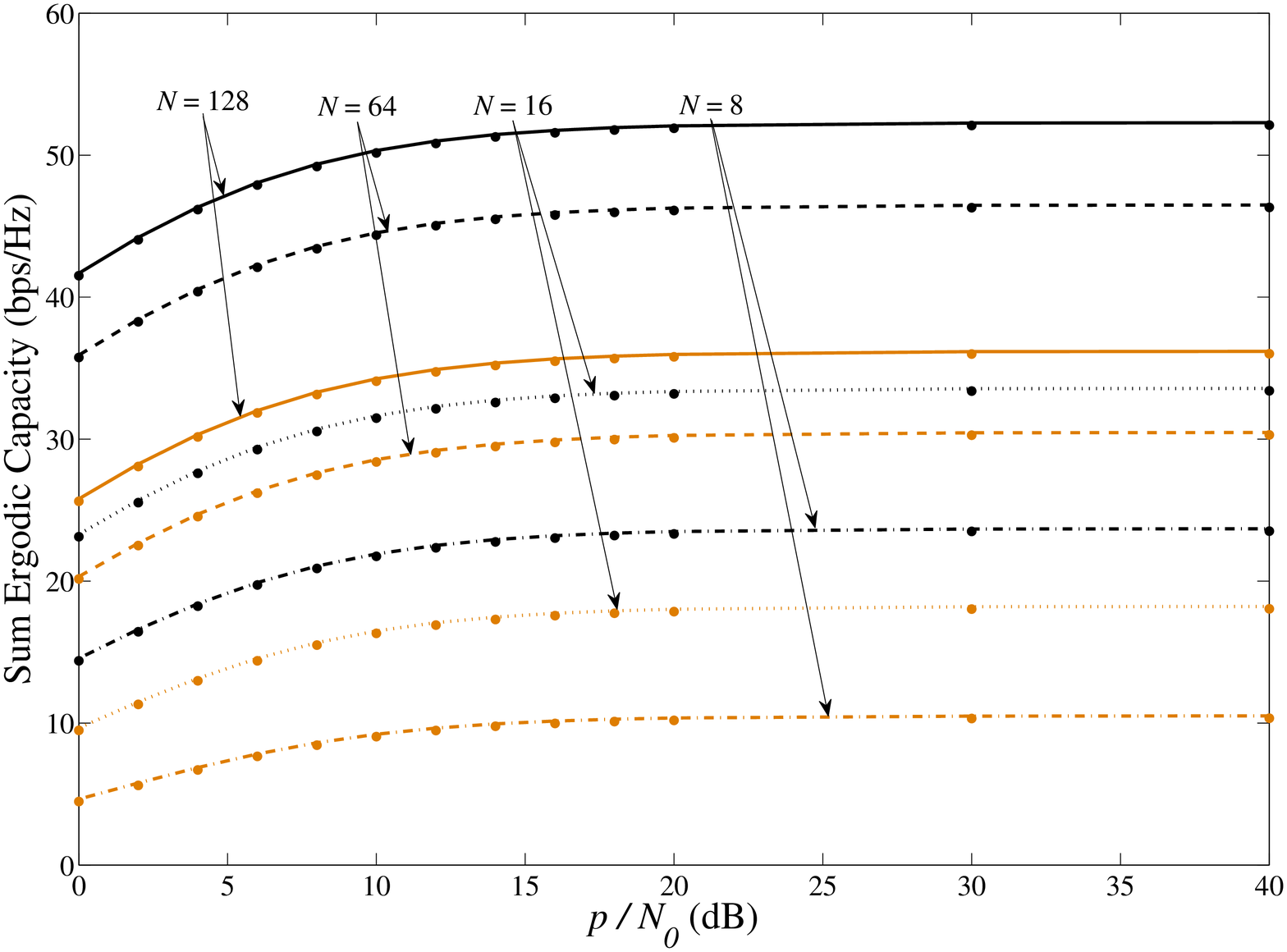}
\caption{(Normalized) Sum ergodic capacity of ZF-SIC vs. various values of average SNR, when $M=8$, and $\{\sigma,\kappa\}=0.15$. Also, black and orange lines correspond to the cases when $K=-\infty$dB (i.e., Rayleigh fading) and $K=10$dB, respectively.}
\label{fig6}
\end{figure}

\begin{figure}[!t]
\centering
\includegraphics[trim=1.5cm 0.2cm 2.5cm 1.2cm, clip=true,totalheight=0.26\textheight]{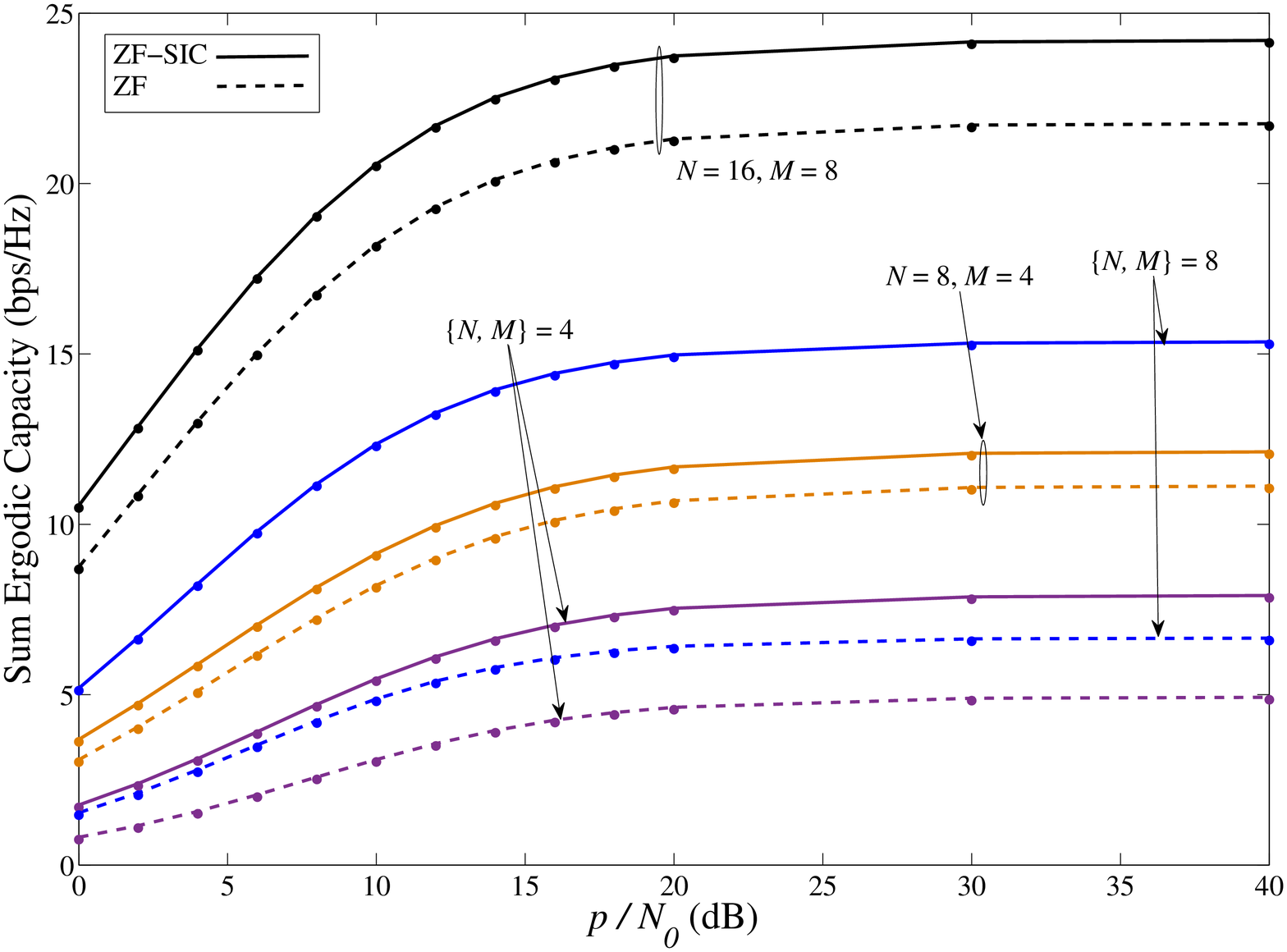}
\caption{(Normalized) Sum ergodic capacity vs. various values of average SNR, when $\{\sigma,\kappa\}=0.1$ and $K=10$dB.}
\label{fig7}
\end{figure}

\begin{table}[!t]
\caption{$T$-TERMS REQUIRED TO BE SUMMED IN (\ref{ergcap}) TO ACHIEVE ACCURACY UP TO THE $3$rd DECIMAL PLACE}
\label{Table}
\begin{center}
\begin{tabular}{l l | l}\hline
$N$ & $M$ & $T$\\\hline
4 & 4 & 3\\
8 & 4 & 9\\
8 & 8 & 4\\
16 & 8 & 13\\
64 & 8 & 42\\
128 & 8 & 76\\\hline
\end{tabular}
\end{center}
*$p/N_{0}=10$dB, $K=6$dB, and $T$ denotes the reserved sum terms. Similar conclusions have been observed for other $p/N_{0}$ and $K$ values.
\end{table}

\section{Conclusions}
\label{Conclusions}
The performance of a multiuser communication system with single-antenna transmitters and a multi-antenna receiver was analyzed. The spatial mode-of-operation was investigated and more specifically the scenario when ZF or the more sophisticated ZF-SIC is applied at the receiver. Rank-$1$ Rician fading channels with i.n.i.d. statistics for each user were considered. This paper focused on a practical communications system with imperfections during the communication link, namely, imperfect channel estimation at the receiver and impaired transceiver hardware. Regarding the latter hardware imperfections, compensation algorithms are applied that mitigate the main hardware impairments. A new closed-form expression for the outage probability was derived, while a new analytical formula with the respect to the average ergodic capacity was obtained. Based on the results, there were some useful engineering observations: the impact of the mentioned imperfections to the system performance, the definition of asymptotic performance bounds, and the beneficial role of SIC-enabled reception in the presence of a LOS signal propagation.

\appendix
\subsection{Derivation of (\ref{snr})}
\label{appsnr}
\numberwithin{equation}{subsection}
\setcounter{equation}{0}
To proceed with the analytical derivation of SNR, the covariance matrix of $\mathbf{w}'$ is required, which is computed as
\begin{align}
\nonumber
&\mathbb{E}[\mathbf{w}'\mathbf{w}'^{\mathcal{H}}]\\
\nonumber
&=p\kappa^{2}_{T}\mathbf{I}_{M}+\sigma^{2}\mathbf{H}^{\dagger}\mathbf{\Omega}p(1+\kappa^{2}_{T})\mathbf{I}_{M}\mathbf{\Omega}^{\mathcal{H}}\left(\mathbf{H}^{\dagger}\right)^{\mathcal{H}}\\
\nonumber
&+\mathbf{H}^{\dagger}(p\kappa^{2}_{R}M+N_{0})\mathbf{I}_{N}\mathbf{H}^{\dagger}-\sigma \mathbf{H}^{\dagger} (p\kappa^{2}_{R}M+N_{0})\mathbf{I}_{N} \left(\mathbf{H}^{\dagger}\right)^{\mathcal{H}}\\
\nonumber
&\times \mathbf{\Omega}^{\mathcal{H}}\left(\mathbf{H}^{\dagger}\right)^{\mathcal{H}}+\sigma^{2}\mathbf{H}^{\dagger}\mathbf{\Omega}\mathbf{H}^{\dagger}(p\kappa^{2}_{R}M+N_{0})\mathbf{I}_{N}\left(\mathbf{H}^{\dagger}\right)^{\mathcal{H}}\mathbf{\Omega}^{\mathcal{H}}\\
\nonumber
&\times \left(\mathbf{H}^{\dagger}\right)^{\mathcal{H}}-\sigma \mathbf{H}^{\dagger}\mathbf{\Omega}\mathbf{H}^{\dagger}(p\kappa^{2}_{R}M+N_{0})\mathbf{I}_{N}\left(\mathbf{H}^{\dagger}\right)^{\mathcal{H}}\\
\nonumber
&\overset{\text{(a)}}=p\kappa^{2}_{T}\mathbf{I}_{M}+(p\kappa^{2}_{R}M+N_{0})(\mathbf{H}^{\mathcal{H}}\mathbf{H})^{-1}+\sigma^{2}p M (1+\kappa^{2}_{T})\\
\nonumber
&\times (\mathbf{H}^{\mathcal{H}}\mathbf{H})^{-1}+\sigma^{2}(p\kappa^{2}_{R}M+N_{0})\mathbf{H}^{\dagger}\mathbb{E}\left[\mathbf{\Omega}\mathbf{H}^{\dagger}\left(\mathbf{H}^{\dagger}\right)^{\mathcal{H}}\mathbf{\Omega}^{\mathcal{H}}\right]\left(\mathbf{H}^{\dagger}\right)^{\mathcal{H}}\\
\nonumber
&\overset{\text{(b)}}=p\Bigg\{\kappa^{2}_{T}\mathbf{I}_{M}+\bigg[\left(\kappa^{2}_{R}M+\frac{N_{0}}{p}\right)+\sigma^{2}M(1+\kappa^{2}_{T})\\
&+\left(\sigma^{2}\left(\kappa^{2}_{R}M+\frac{N_{0}}{p}\right)\Tr\left[(\mathbf{H}^{\mathcal{H}}\mathbf{H})^{-1}\right]\right)\bigg](\mathbf{H}^{\mathcal{H}}\mathbf{H})^{-1}\Bigg\},
\label{cov}
\end{align}
where in steps (a) and (b) we, respectively, used the equalities $\mathbf{H}^{\dagger}(\mathbf{H}^{\dagger})^{\mathcal{H}}=(\mathbf{H}^{\mathcal{H}}\mathbf{H})^{-1}$, $\mathbb{E}[\mathbf{\Omega}\mathbf{\Omega}^{\mathcal{H}}]=M\mathbf{I}_{N}$ and $\mathbb{E}[\mathbf{\Omega}\mathbf{H}^{\dagger}(\mathbf{H}^{\dagger})^{\mathcal{H}}\mathbf{\Omega}^{\mathcal{H}}]=\Tr[(\mathbf{H}^{\mathcal{H}}\mathbf{H})^{-1}]\mathbf{I}_{N}$. Using (\ref{cov}), (\ref{snr}) can be directly obtained.

\subsection{Derivation of (\ref{pdfyy}) and (\ref{pdfy})}
\label{appncxpdf}
\numberwithin{equation}{subsection}
\setcounter{equation}{0}
Dealing with ZF-SIC reception (including typical ZF as a special case), recall that $\mathbf{H} \in \mathbb{C}^{N\times (M-i+1)}$ at the $i$th SIC stage, as specified in the definition of Section \ref{Statistics of SNR}. This is due to the fact that during detection/decoding at the $i$th SIC stage, all the previous channel impact from $(i-1)$ stages has already been removed. We start by using the properties of matrix determinants in $\mathcal{Y}_{i}$ as
\begin{align}
\nonumber
\mathcal{Y}_{i}&=\frac{1}{[(\mathbf{H}^{\mathcal{H}}\mathbf{H})^{-1}]_{ii}}=\frac{\det[\mathbf{H}^{\mathcal{H}}\mathbf{H}]}{\det[\mathbf{H}_{i}^{\mathcal{H}}\mathbf{H}_{i}]}\\
&=\mathbf{h}^{\mathcal{H}}_{i}\mathbf{h}_{i}-\mathbf{h}^{\mathcal{H}}_{i}\mathbf{H}_{i}(\mathbf{H}^{\mathcal{H}}_{i}\mathbf{H}_{i})^{-1}\mathbf{H}^{\mathcal{H}}_{i}\mathbf{h}_{i}=\mathbf{h}^{\mathcal{H}}_{i}(\mathbf{I}_{N}-\mathbf{G}_{i})\mathbf{h}_{i},
\label{y1}
\end{align}
where $\mathbf{G}_{i}\triangleq \mathbf{H}_{i}(\mathbf{H}^{\mathcal{H}}_{i}\mathbf{H}_{i})^{-1}\mathbf{H}^{\mathcal{H}}_{i}$ and $\mathbf{H}_{i}\triangleq [\mathbf{h}_{1}\cdots\mathbf{h}_{i-1}\ \mathbf{h}_{i+1}\cdots\mathbf{h}_{M}]$ denoting the deflated version of $\mathbf{H}$ with its $i$th column (i.e., $\mathbf{h}_{i}$) removed.

Notice that $\mathbf{Q}_{i}\triangleq (\mathbf{I}_{N}-\mathbf{G}_{i})$ is a $N\times N$ matrix and represents the projection onto the null-space of $\mathbf{H}^{\mathcal{H}}_{i}$. In addition, it is a Hermitian, idempotent and symmetric matrix. Therefore, its eigenvalues are either zero or one. Particularly, they are used as
\begin{align}
\text{Eigenvalues of }\mathbf{Q}_{i}: \underbrace{0,0,\ldots,0}_{M-i},\underbrace{1,1,\ldots,1}_{N-M+i}.
\label{eigen}
\end{align}
Hence, $\mathbf{Q}_{i}$ has a rank equal to $N-M+i$, while it is statistically independent from $\mathbf{h}_{i}$.

Capitalizing on the latter observations and using the eigenvalue decomposition, the quadratic form of SNR in (\ref{y1}) can be further simplified as
\begin{align}
\mathcal{Y}_{i}=\mathbf{h}^{\mathcal{H}}_{i}\mathbf{Q}_{i}\mathbf{h}_{i}=\mathbf{h}^{\mathcal{H}}_{i}\mathbf{P}_{i}\mathbf{\Lambda}_{i}\mathbf{P}^{\mathcal{H}}_{i}\mathbf{h}_{i},
\label{y2}
\end{align}
where $\mathbf{P}_{i}$ denotes an orthogonal (unitary) matrix satisfying $\mathbf{P}_{i}\mathbf{P}^{\mathcal{H}}_{i}=\mathbf{P}^{\mathcal{H}}_{i}\mathbf{P}_{i}=\mathbf{I}_{N}$ and $\mathbf{\Lambda}_{i}=\diag\{\lambda_{1},\ldots,\lambda_{N}\}$ corresponds to the eigenvalues of $\mathbf{Q}_{i}$. Finally, based on (\ref{eigen}), (\ref{y2}) becomes
\begin{align}
\mathcal{Y}_{i}=\sum^{N}_{k=1}\lambda_{k}(\mathbf{P}^{\mathcal{H}}_{i}\mathbf{h}_{i})^{\mathcal{H}}(\mathbf{P}^{\mathcal{H}}_{i}\mathbf{h}_{i})=\sum^{N-M+i}_{k=1}(\mathbf{P}^{\mathcal{H}}_{i}\mathbf{h}_{i})^{\mathcal{H}}_{k}(\mathbf{P}^{\mathcal{H}}_{i,k}\mathbf{h}_{i})_{k},
\label{y3}
\end{align}
where $(\mathbf{P}^{\mathcal{H}}_{i}\mathbf{h}_{i})_{k}$ stands for the $k$th coefficient of vector $(\mathbf{P}^{\mathcal{H}}_{i}\mathbf{h}_{i})$. In the trivial case when $\mathbf{h}_{i}$ is zero-mean, then $\mathbf{P}^{\mathcal{H}}_{i}\mathbf{h}_{i}\overset{\text{d}}=\mathbf{h}_{i}$ (i.e., isotropically distributed). This yields $\mathcal{Y}_{i}\overset{\text{d}}=(d^{-\alpha_{i}}_{i}/(K+1))\chi^{2}_{N-M+i}$. On the other hand, when $\mathbf{h}_{i}$ is a non zero-mean vector, i.e., Rician-distributed, the former isotropic identity does not hold. Fortunately, it was recently indicated in \cite[Eq. (8) and \S 4]{siriteanu2016chi} that (\ref{y3}) follows a conditional non-central chi-squared distribution, in the case of a non zero-mean $\mathbf{h}_{i}$, while its corresponding non-centrality parameter follows a central Beta distribution, which is independent of $\mathcal{Y}_{i}$. This yields (\ref{pdfyy}) and (\ref{pdfy}).

\subsection{Derivation of (\ref{cdfyyyyyy}) and (\ref{cdfyyyy})}
\label{appcdfclosform}
\numberwithin{equation}{subsection}
\setcounter{equation}{0}
For $i=1$ (i.e., the Rician-faded stream), we get \cite{b:marcum}
\begin{align}
\nonumber
F_{\mathcal{Y}_{1}|\theta_{1}}(x|u)&=1-\int^{\infty}_{x}f_{\mathcal{Y}_{1}|\theta_{1}}(y|u)dy\\
&=1-Q_{N-M+1}\left(\sqrt{u},\sqrt{\frac{(K+1)x}{d^{-\alpha_{1}}_{1}}}\right).
\label{just}
\end{align}
Thus, the corresponding unconditional CDF reads as
\begin{align}
\nonumber
F_{\mathcal{Y}_{1}}(y)&=1-\int^{1}_{0}Q_{N-M+1}\left(\sqrt{u},\sqrt{\frac{(K+1)x}{d^{-\alpha_{1}}_{1}}}\right)\\
\nonumber
&\times \frac{u^{N-M}(1-u)^{M-2}}{B(N-M+1,M-1)}du\\
\nonumber
&=1-\frac{\sum^{M-2}_{j=0}\frac{\binom{M-2}{j}}{(j+1)}}{B(N-M+1,M-1)}\\
&\times \int^{1}_{0}u^{N-M+j}Q_{N-M+1}\left(\sqrt{u},\sqrt{\frac{(K+1)x}{d^{-\alpha_{1}}_{1}}}\right)du.
\label{just1}
\end{align}
To capture (\ref{just1}) in a closed form, an integral of the type $\mathcal{J}\triangleq \int^{1}_{0}u^{a}Q_{m}(\sqrt{u},b)du$ needs to be solved. Implementing integration by parts, it follows that
\begin{align}
\nonumber
&\mathcal{J}=\int^{1}_{0}u^{a}Q_{m}(\sqrt{u},b)du\\
\nonumber
&=\left.Q_{m}(\sqrt{u},b)\frac{u^{a+1}}{(a+1)}\right\vert ^{1}_{0}-\int^{1}_{0}\frac{u^{a+1}}{(a+1)}\frac{\partial Q_{m}(\sqrt{u},b)}{\partial u}du\\
\nonumber
&\overset{\text{(a)}}=\frac{Q_{m}(1,b)}{(a+1)}-\int^{1}_{0}\frac{u^{a+1}}{2(a+1)}\left(Q_{m+1}(\sqrt{u},b)-Q_{m}(\sqrt{u},b)\right)du\\
\nonumber
&\overset{\text{(b)}}=\frac{Q_{m}(1,b)}{(a+1)}-\int^{1}_{0}\frac{u^{a+\frac{1}{2}}}{2(a+1)}b^{m}\exp\left(-\frac{(b^{2}+u)}{2}\right)I_{m}(b\sqrt{u})du\\
\nonumber
&=\frac{Q_{m}(1,b)}{(a+1)}-\bigg[\int^{\infty}_{0}\frac{u^{2 a+2}}{(a+1)}b^{m}\exp\left(-\frac{(b^{2}+u^{2})}{2}\right)I_{m}(bu)du\\
&-\int^{\infty}_{1}\frac{u^{2 a+2}}{(a+1)}b^{m}\exp\left(-\frac{(b^{2}+u^{2})}{2}\right)I_{m}(bu)du\bigg],
\label{just2}
\end{align}
where (a) and (b) arise due to \cite[Eqs. (2) and (16)]{j:brychkov2012some}. Then, by definition, (\ref{just2}) reads as\footnote{To our knowledge, the derived expression in (\ref{just2}) is novel and has not been reported elsewhere into the literature. Note that the integral in $\mathcal{J}$ cannot be considered as a special case of \cite[App. D]{j:MartinezGoldsmith}.}
\begin{align}
\nonumber
\mathcal{J}&=\frac{1}{(a+1)}\\
&\times \left\{Q_{m}(1,b)-b^{m}[Q_{2(a+1),m}(b,0)-Q_{2(a+1),m}(b,1)]\right\}.
\label{just3}
\end{align}
Using (\ref{just3}) into (\ref{just1}), the desired result in (\ref{cdfyyyyyy}) is obtained.

For $2\leq i\leq M$, noticing that $F_{\mathcal{Y}_{i}}(y)=1-\int^{\infty}_{x}f_{\mathcal{Y}_{i}}(y)dy$, it yields (\ref{cdfyyyy}).

\subsection{Derivation of (\ref{cdfy})}
\label{appcdfy}
\numberwithin{equation}{subsection}
\setcounter{equation}{0}
The CDF of SNR for the $i$th stream is defined as
\begin{align}
F_{\text{SNR}_{i}}(x)\triangleq \text{Pr}\left[\text{SNR}_{i}\leq x\right]=\text{Pr}\left[\mathcal{Y}_{i}\leq \frac{\psi x}{1-\kappa^{2}_{T}x}\right].
\label{cdfdef}
\end{align}
Referring back to (\ref{psi}), the cumbersome parameter $\Tr[(\mathbf{H}^{\mathcal{H}}\mathbf{H})^{-1}]$ is included within the auxiliary variable $\psi$. The analytical representation of $\Tr[(\mathbf{H}^{\mathcal{H}}\mathbf{H})^{-1}]$ is infeasible for the considered rank-$1$ Rician fading channel. This occurs because the inverse non-central Wishart PDF is involved, which is generally unknown so far. Nevertheless, it can be efficiently approximated by a corresponding inverse central Wishart PDF. More specifically, let
\begin{align}
\mathbf{Z}\overset{\text{d}}=\mathcal{CW}_{M}\left(N-M+i,\diag\left\{\frac{d^{-a_{i}}_{i}}{(K+1)}\right\}^{M}_{i=1}+\frac{1}{N}\mathbf{H}^{\mathcal{H}}_{d}\mathbf{H}_{d}\right).
\end{align}
Then, the Gramian matrix $\mathbf{H}^{\mathcal{H}}\mathbf{H}$ can be approached by a central Wishart distribution, such that
\begin{align}
\mathbf{H}^{\mathcal{H}}\mathbf{H}\overset{\text{d}}\approx \mathbf{Z}.
\end{align}
It was indicated in \cite{tan1983approximating} that the Gramian $\mathbf{H}^{\mathcal{H}}\mathbf{H}$ and $\mathbf{Z}$ share the same expected value, while there is a slight difference between their variances in the order of $\mathcal{O}(N^{-1})$. Moreover, the accuracy of the aforementioned approximation for a rank-$1$ Gramian $\mathbf{H}^{\mathcal{H}}\mathbf{H}$ was tested and verified in \cite[\S VI.B.2]{j:SiriteanuMiyanaga2012}. 

Therefore, $\Tr[(\mathbf{H}^{\mathcal{H}}\mathbf{H})^{-1}]\approx \Tr[\mathbf{Z}^{-1}]$, while it holds from \cite[Lemma 6]{LozanoTulinoVerdu} that
\begin{align}
\mathbb{E}\left[\Tr[(\mathbf{H}^{\mathcal{H}}\mathbf{H})^{-1}]\right]\approx \mathbb{E}\left[\Tr[\mathbf{Z}^{-1}]\right]=\frac{M-i+1}{N-M+i-1},\ \ N>M
\label{mean}
\end{align}
and
\begin{align}
\nonumber
&\Var\left[\Tr[(\mathbf{H}^{\mathcal{H}}\mathbf{H})^{-1}]\right]\approx \Var\left[\Tr[\mathbf{Z}^{-1}]\right]\\
&=\frac{(M-i+1)N}{(N-M+i-1)^{2}((N-M+i-1)^{2}-1)},N>M+1.
\label{var}
\end{align}
Obviously, both the above expectation and the corresponding variance (i.e., second order statistic) take very low values for the considered case study, especially when $N\gg M$. Thus, $\Tr[(\mathbf{H}^{\mathcal{H}}\mathbf{H})^{-1}]\ll 1$. Furthermore, recall that $\{\kappa^{2}_{T},\kappa^{2}_{R}\}\ll 1$. Consequently, keeping in mind that base stations are normally equipped with advanced low-noise amplifiers (LNAs), while it usually holds that $\sigma^{2} \ll 1$ \cite[Fig. 3]{Bjornsonhi2014}, the parameter $\sigma^{2}(\kappa^{2}_{R}M+N_{0}/p)\Tr[(\mathbf{H}^{\mathcal{H}}\mathbf{H})^{-1}]$ can be neglected from $\psi$, thus, the latter term can be relaxed as
\begin{align}
\psi\approx \left(\kappa^{2}_{R}M+\frac{N_{0}}{p}\right)+\sigma^{2}M(1+\kappa^{2}_{T}).
\label{psiapprox}
\end{align}
Using (\ref{psiapprox}) into (\ref{cdfdef}) yields (\ref{cdfy}).

\subsection{Derivation of (\ref{ergcap})}
\label{appergcap}
\numberwithin{equation}{subsection}
\setcounter{equation}{0}
Using (\ref{out}) into (\ref{cdef}), while setting $\kappa_{T}=0$, it yields that
\begin{align}
\nonumber
&\overline{C}_{i|\theta_{i}}=\\
&\int^{\infty}_{0}\frac{Q_{N-M+i}\left(\sqrt{\theta_{i}},\sqrt{\frac{\left(\left(\kappa^{2}_{R}M+\frac{N_{0}}{p}+\sigma^{2}M(1+\kappa^{2}_{T})\right) x\right)}{(K+1)^{-1}d^{-\alpha_{i}}_{i}}}\right)}{1+x}dx.
\label{cdefff}
\end{align}
To further proceed, the Marcum-$Q$ function can be expanded as \cite{j:Dillard1973}
\begin{align}
\nonumber
&Q_{m}(\sqrt{a},\sqrt{b x})=\\
&\sum^{\infty}_{k=0}\sum^{m+k-1}_{j=0}\frac{\left(\frac{a}{2}\right)^{k}\left(\frac{b x}{2}\right)^{j}\exp\left(-\frac{b x}{2}\right)}{k!j!\exp\left(\frac{a}{2}\right)}, \ \ m \in \mathbb{N}^{+},
\label{qexpansion}
\end{align}
and
\begin{align}
Q_{m}(0,\sqrt{b x})=\sum^{m-1}_{j=0}\frac{\left(\frac{bx}{2}\right)^{j}\exp\left(-\frac{b x}{2}\right)}{j!}, \ \ m \in \mathbb{N}^{+}.
\label{qexpansion1}
\end{align}
Then, an integral of the following type appears
\begin{align}
\int^{\infty}_{0}\frac{x^j \exp\left(-\frac{b x}{2}\right)}{1+x}dx,
\label{intqfunc}
\end{align}
which can be directly evaluated in a closed-form solution with the aid of \cite[Eq. (3.383.10)]{tables}. Thereby, the desired result in (\ref{ergcapray}) is extracted for the Rayleigh-fading case, using (\ref{cdefff}), (\ref{qexpansion1}) and (\ref{intqfunc}).

For the more challenging Rician-fading case, the conditioning on the Beta-distributed $\theta_{1}$ parameter needs also to be averaged out. Hence, following similar steps as in deriving (\ref{ergcapray}) and using (\ref{qexpansion}), we get
\begin{align}
\nonumber
\overline{C}_{1}&=\int^{1}_{0}\overline{C}_{1|\theta_{1}}(u)f_{\theta_{1}}(u)du\\
\nonumber
&=\sum^{\infty}_{k=0}\sum^{N-M+k}_{j=0}\left(\frac{\left(\kappa^{2}_{R}M+\frac{N_{0}}{p}+\sigma^{2}M\right)}{2(K+1)^{-1}d^{-\alpha_{1}}_{1}}\right)^{j}\\
\nonumber
&\times \frac{\exp\left(\frac{\left(\kappa^{2}_{R}M+\frac{N_{0}}{p}+\sigma^{2}M\right)}{2(K+1)^{-1}d^{-\alpha_{1}}_{1}}\right)\Gamma\left(-j,\frac{\left(\kappa^{2}_{R}M+\frac{N_{0}}{p}+\sigma^{2}M\right)}{2(K+1)^{-1}d^{-\alpha_{1}}_{1}}\right)}{k!2^{k}B(N-M+1,M-1)}\\
&\times \int^{1}_{0}u^{N-M+k}(1-u)^{M-2}\exp\left(-\frac{u}{2}\right)du.
\end{align}
After some straightforward algebra, we arrive at (\ref{ergcap}).


\ifCLASSOPTIONcaptionsoff
  \newpage
\fi

\bibliographystyle{IEEEtran}
\bibliography{IEEEabrv,References}

\end{document}